\newcommand{\pentacle}{%
  \mathbin{%
    \begin{tikzpicture}[scale=0.15, baseline=-0.5ex]
      \begin{scope}[rotate=180]
        \draw[line width=0.8pt] (0,0) circle(1cm);
        \foreach \i in {0,72,...,288} {
          \coordinate (P\i) at ({cos(\i+18)}, {sin(\i+18)});
        }
        \draw[line width=0.4pt] (P0) -- (P144) -- (P288) -- (P72) -- (P216) -- cycle;
      \end{scope}
    \end{tikzpicture}%
  }%
}
\theoremstyle{definition}
\newtheorem{definition}{Definition}
\newtheorem{theorem}{Theorem}
\newtheorem{example}{Example}
\newtheorem{remark}{Remark}
\algnewcommand\algorithmiccase{\textbf{case}}
\newenvironment{breakablealgorithm}
  {
   \begin{center}
     \refstepcounter{algorithm}
     \hrule height.8pt depth0pt \kern2pt
     \renewcommand{\caption}[2][\relax]{
       {\raggedright\textbf{\fname@algorithm~\thealgorithm} ##2\par}%
       \ifx\relax##1\relax 
         \addcontentsline{loa}{algorithm}{\protect\numberline{\thealgorithm}##2}%
       \else 
         \addcontentsline{loa}{algorithm}{\protect\numberline{\thealgorithm}##1}%
       \fi
       \kern2pt\hrule\kern2pt
     }
  }{
     \kern2pt\hrule\relax
   \end{center}
  }
\newcommand{\rust}[1]{{\color{blue}#1}}
\newif\ifdraft\drafttrue
\newcommand{\munyque}[1]{{\color{cyan}MM: #1}}
\newcommand{\munyquem}[1]{{\color{cyan}#1}}
\newcommand{\davide}[1]{{\color{red}DC: #1}}
\newcommand{\munyque}[1]{}
\newcommand{\munyquem}[1]{{#1}}
\newcommand{\davide}[1]{}
\title[On Angels and Demons]{On Angels and Demons:\\ Strategic (De)Construction of Dynamic Models} 
\thanks{This is an extended version of the paper with the same title that will appear in AAMAS 2026,
which contains technical appendices with proof details.}
\author{Davide Catta}
\affiliation{
  \institution{LIPN, CNRS, Université Sorbonne Paris Nord}
  \city{Villetenause}
  \country{France}}
\email{catta@lipn.univ-paris13.fr}
\author{Rustam Galimullin}
\affiliation{
  \institution{University of Bergen}
  \city{Bergen}
  \country{Norway}}
\email{rustam.galimullin@uib.no}
\author{Munyque Mittelmann}
\affiliation{
  \institution{LIPN, CNRS, Université Sorbonne Paris Nord}
  \city{Villetenause}
  \country{France}}
\email{mittelmann@lipn.univ-paris13.fr}
\begin{abstract}
In recent years, there has been growing interest in logics that formalise strategic reasoning about agents capable of modifying the structure of a given model.  This line of research has been motivated by applications where a modelled system evolves over time, such as communication networks, security protocols, and multi-agent planning. In this paper, we introduce three logics for reasoning about strategies that modify the topology of weighted graphs. In \textit{Strategic Deconstruction Logic}, a destructive agent (the demon) removes edges up to a certain cost. In \textit{Strategic Construction Logic}, a constructive agent (the angel) adds edges within a cost bound. Finally, \textit{Strategic Update Logic} combines both agents, who may cooperate or compete. We study the expressive power of these logics and the complexity of their model checking problems.
\end{abstract}
\keywords{Strategic Reasoning, Model Change, Expressivity, Model Checking, Modal Logic}
\newcommand{\BibTeX}{\rm B\kern-.05em{\sc i\kern-.025em b}\kern-.08em\TeX}
\begin{document}


\pagestyle{fancy}
\fancyhead{}


\maketitle 





\section{Introduction} 

The growing adoption and reliance on autonomous systems in safety-critical applications, such as autonomous vehicles \cite{bila2016vehicles} and cybersecurity systems \cite{bhamare2020cybersecurity}, call for reliable verification methods. 
Model checking \cite{HandbookMC2018} is one of the de facto standard approaches to verification of such temporal properties of a given system as safety and reachability. 
In the context of Multi-Agent Systems (MAS), this approach was extended to capture the interaction 
and strategic behaviour of autonomous agents. 
Properties of a system that one would like to verify are usually expressed in temporal or strategic logics, like the Linear Temporal Logic (LTL) \cite{Pnu77}, Computation Tree Logic (CTL) \cite{ctl}, and Alternating-time Temporal Logic (ATL) \cite{alur2002}.

In model checking, a given system is typically represented using a \textit{static} model (e.g., labelled state-transition models or concurrent game structures), which describes a fixed set of configurations, or states, of the system and how it transitions between them. Such models are incapable of capturing scenarios in which the structure of the system may dynamically change, for instance, caused by actions of agents or the removal of vulnerable components. 
The assumption of static models limits the applicability of model checking approaches, as many real-life applications are inherently \textit{dynamic}.
One example would be the addition of a new route in a metro system, which may cause bottlenecks at interchange stations. 
Another example is the implementation of defence strategies to prevent cyberattacks exploiting system vulnerabilities \cite{CattaLM23}.

Motivated by these limitations, there has been a growing interest in logic-based approaches to the specification and verification of dynamic systems in recent years. 
A notable line of research draws inspiration from the sabotage game \cite{Benthem05}, which is a reachability game on graphs where one player, called the demon, can delete edges to obstruct the other player, the traveller, from reaching her goal. \textit{Obstruction Logic} (OL) \cite{CattaLM23} was recently proposed to analyse sabotage-like games on weighted graphs, where the demon can temporarily disable edges whose weights do not exceed a specified cost. OL, however, only captures a particular type of graph modifications, where removed edges are immediately restored after the traveller's move. Returning to the cybersecurity example, while it can represent the existence of defensive measures that briefly block an attacker's access to a sensitive module, it fails to capture measures such as removing vulnerable execution paths. 

\textbf{Our contribution.}
We propose three novel logics for reasoning about strategic permanent change of weighted graphs. 
The first one, \textit{Strategic Deconstruction Logic} (SDL), has a similar flavour to OL:  a destructive agent, or the demon, permanently removes edges up to a certain cost. 
In the cybersecurity setting, permanent removal of edges allows blocking vulnerable paths long before the attacker reaches them, which is impossible in OL.  
The second logic, \textit{Strategic Construction Logic} (SCL),
considers, instead, a constructive agent, or the angel, that can add new edges within a cost bound. Finally, \textit{Strategic Update Logic} (SUL) combines both agents acting concurrently à la ATL. In SUL, the angel and demon can cooperate towards the same goal, and their behaviour may even include joint strategies to mimic OL-like strategies. 
Additionally, SUL captures situations in which they are adversarial, thereby enhancing the strategic dimension of the problem. The main advantage of the proposed logics is that they enable the modelling and verification of dynamic systems in which access control and defence mechanisms can be employed during the execution of the system. 

For the new logics, we, first, demonstrate that each one of them is strictly more expressive than CTL. Second, we show that SDL and SCL are, expressivity-wise, incomparable, and that SUL subsumes both of them. 
Next, we study and discuss their nuanced relation with OL. 
We also investigate the model checking problem for the logics and show that for SDL, SCL, and a fragment of SUL, the problem is PSPACE-complete, whereas for the full SUL it is in EXPSPACE. 

\section{Logics of Angels and Demons}
\label{sec:logics}





We introduce logics for games played on weighted directed graphs with a serial transition relation. The games involve players who can modify the graph and \textit{the traveller} who moves along the edges of the graph. In the first case, the modifying player is called \textit{the demon}, and she can remove a subset of edges from the graph, possibly empty, up to a certain cost. Then the traveller makes a move in the modified graph. In the second case, the modifying player is called \textit{the angel}\footnote{The name of the edge-removing player, the demon, comes from the literature on sabotage games \cite{aucher2018modal}.  Hence, as the counterpart to the demon, we call the edge-adding agent the angel. 
To alleviate the connotations coming with such names, we could have called the demon the \textit{remover}, or \textit{saboteur}, and the angel the \textit{constructor}. Indeed, in the setting of adding and removing edges, the demon can be viewed as benevolent if she removes edges to bad or undesirable states, and the angel can be considered malevolent if she adds edges to undesirable states. 
Perhaps, to play on the metaphor, we could view actions of celestial beings as giving or taking away options, and it is up to a mortal, the traveller, to make her own choices. }, and she is able to add edges to the graph up to a certain cost. Finally, we consider the scenario where both the demon and the angel are present and they make concurrent choices of which edges to add and to remove. In all of the logics, we are able to reason about strategies of the modifying players reaching a temporal goal regardless of the moves of the traveller. 

\subsection{Models}



Let  $\At = \{p,q,r...\}$ be a countable set of \emph{atoms}. 

\begin{definition}[Model]
    A \emph{model} is a tuple $\M=(S,\to,\mathcal{V}, \mathcal{C})$,
    where: 
    
    \begin{itemize}
        \item  $S$ is a non-empty set of states.
        \item $\to \subseteq S\times S$ is a serial 
        binary relation over $S$. We will write $s\to s'$ for $(s,s')\in \to$. 
        \item $\mathcal{V}: \At \to 2^S$ is a valuation function specifying which atoms hold in which states.
        \item $\mathcal{C}: S\times S \to \mathbb{N}^+$ is a cost function assigning to \emph{any} pair of states a positive natural number. Intuitively, this number represents the cost of removing or adding an edge between two given states. 
    \end{itemize}
Given a model $\M$, its set of states will be denoted by $S^\M$, and its set of edges by $\xrightarrow{\M}$. 
A \emph{pointed model} is a pair $(\M,s)$, where $\M$ is a model and $s \in S^\M$. 
We write $\M\setminus A$ for $(S,{\to \!\! \setminus A} , \mathcal{V}, \mathcal{C})$, where $A\subseteq \xrightarrow{\M}$. And we write $\M \cup A$ for $(S, \to \!\! \cup A, \mathcal{V}, \mathcal{C})$, where $A\subseteq S^\M \times S^\M$.

Given a model $\M$, we let $True(s) = \{p \in \At | s \in \mathcal{V}(p)\}$ be the set of atoms true in state $s$. Then, we define the \emph{size} of $\M$ as $|\M| = |S| + |\!\!\to\!\!| + \sum_{s \in S} |True (s)| + \sum_{s, t \in S} \mathcal{C}(s,t)$, where integers are encoded in binary. Finally, we call $\M$ \emph{finite}, if $|\M|$ is finite. 
\end{definition}

\begin{remark}
Note that we have defined a single cost for both removing and adding an edge. We could have relaxed this assumption, and all the results in the paper would hold. We stick with the current definition for simplicity.
\end{remark}

\def\sacc#1{\overset{#1}{\rightsquigarrow}}
\def\racc#1{\overset{#1}{\triangleright}}

\begin{definition}[(De)Construction and Updates]

Given two pointed models $(\M,s)$ and $(\M',s')$, we say that $(\M',s')$ is \emph{deconstruction accessible} from $(\M,s)$ with cost at most $n$, denoted by $(\M,s) \sacc{n} (\M',s')$, iff $\M'=\M \setminus A$ for some $A\subset \xrightarrow{\M}$, $s\xrightarrow{\M'} s'$, and $\mathcal{C}(A) \leqslant n$, where $\mathcal{C}(A)$ is $\sum_{x\in A} \mathcal{C}(x)$. We will call $\M'$ an \emph{$n$-submodel} of $\M$. 

We say that $(\M',s')$ is \textit{construction accessible} from $(\M,s)  $ with cost at most $n$, denoted $(\M,s) \racc n (\M',s')$, if and only if $\M'=\M \cup A$ for some $A\subseteq ((S^\M \times S^\M)\setminus \to^\M) $, $\C(A) \leqslant n$, and $s\xrightarrow{\M'} s'$. We will also call $\M'$ an \textit{$n$-supermodel} of $\M$. 

Given a model $\M$, its $n$-supermodel $\M_1$ obtained by adding the set of edges $A$, and $m$-submodel $\M_2$ obtained by removing the set of edges $B$, we will denote by $\M_1 \star \M_2 = (\M \setminus B) \cup A$ the resulting model after edges $B$ were removed and edges $A$ were added. We will call  $\M_1 \star \M_2$ an \textit{$n$-$m$-update}. Note that $\M_1 \star \M_2$ is well-defined as the sets of edges $A$ and $B$ are disjoint. 

For two pointed models $(\M, s)$ and $(\M', s')$, we say that $(\M', s')$ is \textit{update accessible} from $(\M, s)$ with costs $n$ and $m$, denoted by $(\M, s) \overset{n,m}{\Rightarrow}(\M', s')$, iff $\M'$ is an $n$-$m$-update of $\M$ and $s \xrightarrow{\M'} s'$. 

The agent making the move $s \xrightarrow{\M'} s'$ in any of the three contexts of deconstruction, construction, or update accessibility,  is called the \textit{traveller}.
\end{definition}

Intuitively, a model $\M'$ is deconstruction accessible with cost at most $n$ from $\M$  if we can obtain $\M'$ by removing edges from $\M$ with the total cost of up to $n$. The $n$-supermodel $\M'$ is obtained from $\M$ by adding a \textit{new} set of edges that are not already present in $\M$ and whose total cost does not exceed $n$.
Observe that since all the notions of accessibility above are defined between two (pointed) models, all the operations preserve seriality.

\subsection{Strategic Deconstruction Logic}

We call an agent who is able to remove edges \textit{the demon}. As opposed to sabotage games \cite{Benthem05}, we are interested in not merely one-step actions of the demon, but rather in \textit{strategies} of the demon to ensure some property against all moves of the traveller. 

\begin{definition}[Demonic Strategy]
\label{def:demstrat}
    Let $\pi$ be a (countably) infinite sequence of pointed models.  Then $\pi$ is a \emph{decreasing model path with cost $n$} iff for every $i \geqslant 0$ we have that $\pi(i) \sacc{n} \pi(i+1)$. Note that such a sequence is indeed infinite, as we can always take a $0$-submodel of the given model, i.e., keep the current model intact.
    We will denote by $\pi^\M$ the corresponding sequence of (non-pointed) models obtained by dropping the state from each element of $\pi$.  

A \emph{demonic strategy} is a function $\strat$ that, given as an input a pointed model $(\M,s)$, outputs a set of edges $A\subset \xrightarrow{\M}$.

 A decreasing model path $\pi $ is \emph{compatible} with a demonic strategy $\strat$ iff for all $i \in \mathbb{N} $, we have that $\strat( \pi(i))=A$ implies $\pi^{\M}(i+1)= \pi^\M (i) \setminus A$. We let $Out(\strat,(\M,s))$ denote the set of paths that are compatible with $\strat$ and whose first component is $(\M,s)$. A demonic strategy $\strat$ has cost $n$ (in this case, it will be called \emph{$n$-strategy} $\strat$) iff each model path that is compatible with the strategy has cost $n$. 
\end{definition}

\begin{definition}[Strategic Deconstruction Logic]
\label{def:sdl}
State $(\varphi)$ and path $(\psi)$ formulae are defined by mutual recursion: 
\begin{align*}\label{eq:formSDL}
\varphi := p \mid \neg \varphi \mid (\varphi \land \varphi) \mid \estrat{n} {\psi} \qquad
\psi := \nextt \varphi \mid \varphi \until \varphi \mid \varphi \release \varphi
\end{align*}

where $p\in \At$ and $n\in \mathbb{N}$. Formulae of \textit{Strategic Deconstruction Logic} (SDL) are all and only the state formulae. Constructs $\estrat{n} {\psi}$ are read as `there is a demonic $n$-strategy such that for all moves of the traveller, $\psi$ holds.' Temporal modalities $\nextt \varphi$ are read as `$\varphi$ is true in the ne$\mathsf{X}$t step', modalities $\varphi \until \psi$ mean `$\varphi$ holds $\mathsf{U}$ntil $\psi$ is true', and modalities $\varphi \release \psi$ are read as `truth of $\varphi$ $\mathsf{R}$eleases the requirement for the truth of $\psi$'.

We define the operator for \textit{sometime} as $\mathsf{F} \varphi := \top \until \varphi$, and \textit{always} as $\mathsf{G} \varphi := \bot \release \varphi$. The dual of the demonic operator is defined as $\astrat n  \psi: = \neg\estrat{n} \neg \psi$, and is read as `for all demonic $n$-strategies, there is a move of the traveller such that $\psi$ holds.'
Other propositional connectives, like implication, are defined as usual, and the conventions for removing parentheses hold.  We will sometimes call $n$ in strategic demonic operators $\estrat{n} {\psi}$ a \textit{resource bound}.

Given a formula $\varphi$ of SDL, its \textit{size}, denoted by $|\varphi|$, is the number of symbols in $\varphi$ with integers encoded in binary.
\end{definition}

\begin{definition}[SDL Semantics]
\label{def::sdl_sem}
    Let $(\M,s)$ be a pointed model and $\varphi$ be a formula of SDL. We define the \emph{satisfaction relation} $(\M,s)\models \varphi$ by the induction on $\varphi$ omitting Boolean cases for brevity: 

$\begin{array}{l l l}
   
   (\M,s) \models \estrat n \psi & \text{iff} & \text{there is an $n$-strategy } \strat \text{ s.t. for all}
   \\
   &  & \pi\in Out(\strat, (\M,s)) \text{ we have } \pi\models \psi\\
      
   
\end{array}$

 \noindent Given a path formula $\psi$ and a path $\pi$, the satisfaction relation $\pi \models \psi$ is defined by the induction on $\psi$:  

$\begin{array}{l l l}
   \pi \models \nextt \varphi_1  & \text{iff} &   \pi(1) \models \varphi_1  
   \\
   
    \pi \models  \varphi_1 \until \varphi_2 & \text{iff}  & \text{there is } j\in \mathbb{N} \text{ such that } \pi(j)\models \varphi_2 
   \\
   & &  \text{and } \pi(i)\models \varphi_1 \text{ for each } 0\leqslant i < j
   \\
   \pi \models \varphi_1 \release \varphi_2 & \text{iff}  & \text{either  }  \pi(j)\models \varphi_2 
   \text{ for each } j \in \mathbb{N} \text{ or}\\
   & &  \text{there is } k \in \mathbb{N} \text{ such that }\pi(k)\models \varphi_1 
   \\
   && \text{and } \pi(i) \models \varphi_2 \text{ for all } 0 \leqslant i \leqslant k
\end{array}$
    
\end{definition}

Observe that we can define the standard modal box and diamond modalities using demonic strategies with 0 resources. Indeed, take $\square \varphi := \estrat 0 \nextt \varphi$ and $\Diamond \varphi:= \astrat 0 \nextt \varphi$. It is easy to verify that the box and diamond have exactly the intended semantics as the only strategy the demon can play is keeping the model intact.


\begin{example}{(Access control)}
\label{sec:example1}
    Let model $\M_{1}$ in Figure \ref{fig:example} depict a computational system managed by a security engineer. 
    The system user can be seen as the traveller, whereas the security engineer controls its access (i.e. the demon). 
    Also, let $error$, $admin$, and $server$ be atoms. 
State $s_0$ represents an authentication stage and leads to state $s_1$ if the user enters an incorrect password and to $s_2$ otherwise. Node $s_1$ is a failure state and prevents the user from accessing any other state of the system, including going back to $s_0$. States $s_2$ and $s_3$ are system modules within the server, with $s_3$ representing an administrative module. 
The valuation 
is as follows: $\mathcal{V}(s_0) =\emptyset$,  $\mathcal{V}(s_1) =\{error\}$, $\mathcal{V}(s_2) =\{ server\}$, and $\mathcal{V}(s_3) =\{server, admin\}$.

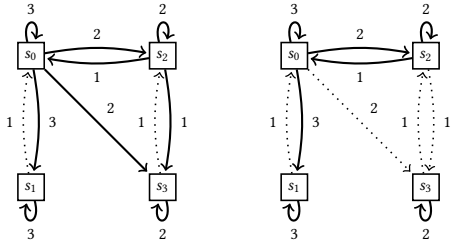
\begin{figure}[h!]
\centering
 \begin{tikzpicture}[->,shorten >=1pt,auto,node distance=3cm and 3cm, semithick,minimum size=0.5cm, scale=0.7, transform shape]

\node[rectangle, draw=black, fill=white]  (A1) at (0,0) {$s_0$};
\node[rectangle, draw=black, fill=white] (B1) at (0,-2.5) {$s_1$};
\node[rectangle, draw=black, fill=white]  (C1) at (2.5,0)  {$s_2$};
\node[rectangle, draw=black, fill=white] (D1) at (2.5,-2.5) {$s_3$};

\path[->,thick] (A1) edge[bend left=10] node {3} (B1); 
\path[->,thick] (A1) edge[bend left=10] node {2} (C1);
\path[->,dotted] (B1) edge[bend left=10] node {1} (A1);
\path[->,thick] (C1) edge[bend left=10] node {1} (A1);
\path[->,thick] (C1) edge[bend left=10] node {1} (D1);
\path[->,thick] (A1) edge node {2} (D1); 
\path[->,dotted] (D1) edge[bend left=10] node {1} (C1); 

\path[->,thick] (D1) edge[loop below] node[align=left,pos=.5] {2} (D1);
\path[->,thick] (B1) edge[loop below] node[align=left,pos=.5] {3} (B1);
\path[->,thick] (A1) edge[loop above] node[align=left,pos=.5] {3} (A1);
\path[->,thick] (C1) edge[loop above] node[align=left,pos=.5] {2} (C1);




\node[rectangle, draw=black, fill=white] (A2) at (5,0) {$s_0$};
 
\node[rectangle, draw=black, fill=white] (B2) at (5,-2.5) {$s_1$};
\node[rectangle, draw=black, fill=white] (C2) at (7.5,0) {$s_2$};
\node[rectangle, draw=black, fill=white] (D2) at (7.5,-2.5) {$s_3$};

\path[->,thick] (A2) edge[bend left=10] node {3} (B2); 
\path[->,thick] (A2) edge[bend left=10] node {2} (C2);
\path[->,dotted] (B2) edge[bend left=10] node {1} (A2);
\path[->,thick] (C2) edge[bend left=10] node {1} (A2);
\path[->,dotted] (C2) edge[bend left=10] node {1} (D2);
\path[->,dotted] (A2) edge node {2} (D2); 
\path[->,dotted] (D2) edge[bend left=10] node {1} (C2); 

\path[->,thick] (D2) edge[loop below] node[align=left,pos=.5] {2} (D2);
\path[->,thick] (B2) edge[loop below] node[align=left,pos=.5] {3} (B2);
\path[->,thick] (A2) edge[loop above] node[align=left,pos=.5] {3} (A2);
\path[->,thick] (C2) edge[loop above] node[align=left,pos=.5] {2} (C2);


\end{tikzpicture}
\caption{Models $\M_{1}$ (left) and $\M_{2}$ (right). 
Arrows with solid lines represent live transitions in the system, whereas dotted lines depict possible new transitions within cost 3. 
}
\label{fig:example}
\end{figure} 
Assume that the user is trusted to access the system, but not 
the admin module. In this case, the security engineer has a strategy with cost 2 that removes the transitions from $s_2$ to $s_3$ and $s_0$ to $s_3$ and prevents global access to the admin state from other states. The order of removing transitions depends on the current state. The resulting model $\M_2$ after two steps of the game is shown in Figure \ref{fig:example}. We can thus see that  $(\M_1,s) \models \estrat 2 \mathsf{G} \,\neg \, admin$, for any $s \neq s_3$. 




\end{example}


\subsection{Strategic Construction Logic}

\def\angstrat{\mathcal{S}}

Now we introduce Strategic Construction Logic that, in some sense, is dual to SDL. Whilst in SDL the demon can remove edges, in Strategic Construction Logic \textit{the angel} can strategically add edges.

\begin{definition}[Angelic Strategy]
Given a countably infinite sequence $\pi$ of pointed models, $\pi$ is an \emph{increasing model path with cost $n$} iff for every $i\geqslant 0$ we have that $\pi(i) \racc{n} \pi(i+1)$. As in Definition \ref{def:demstrat}, we write $\pi^\M$ to denote the corresponding sequence of models.


    An \emph{angelic strategy} is a function $\angstrat$ that, given a pointed model $(\M,s)$, returns a set of edges $A\subseteq ((S^\M \times S^\M)\setminus \to^\M) $. 
    
    An increasing model path $\pi$ is compatible with $\angstrat$ iff for all $i \in \mathbb{N}$, we have that  $\angstrat(\pi(i))= A$ implies $\pi^\M(i+1) = \pi^\M (i) \cup A$. An angelic strategy $\angstrat$ will be called an \emph{n-strategy} iff every increasing model path that is compatible with the strategy has cost  $n$. We let $Out(\angstrat,(\M,s))$ denote the set of increasing model paths starting at $(\M,s)$ that are compatible with $\angstrat$.
 \end{definition}

\begin{definition}[Strategic Construction Logic]
    Formulae of \textit{Strategic Construction Logic} (SCL) are defined similarly to those of Strategic Deconstruction Logic in Definition \ref{def:sdl},
with the difference that state formulae of the form $\estrat{n}\psi$ are replaced by state formulae for strategic angelic operators $\angel{n}\psi$.  Constructs $\angel{n}\psi$ are read as `there is an angelic $n$-strategy such that for all moves of the traveller, $\psi$ holds'. The dual of the operator is defined as $\allangel n \psi := \lnot \angel n \lnot \psi$.
\end{definition}



 The semantics of SCL is defined similarly to the semantics of SDL (Definition \ref{def::sdl_sem}), except for the case of the strategic operator. 
 \begin{definition}[SCL Semantics]
    Let $(\M,s)$ be a pointed model. 

$\begin{array}{l l l}
   (\M,s) \models \angel n \psi &  \text{iff} & \text{there is an $n$-strategy } \angstrat \text{ s.t. for all}
   \\
   &  & \pi\in Out(\angstrat, (\M,s)) \text{ we have } \pi\models \psi
\end{array}$

Similarly to SDL, we can define the standard modal box and diamond in SCL as $\square \varphi := \angel 0 \nextt \varphi$ and $\Diamond \varphi:= \allangel 0 \nextt \varphi$.

\end{definition}

\begin{example}{(Access control, cont.)}
Let us consider model $\M_1$ from Example \ref{sec:example1} again. Once the user enters the failure state $s_1$ by, e.g., typing a wrong password, she remains permanently blocked there.  Assume now that we want to allow the user to exit the failure state in one step.
Adding a transition from $s_1$ to $s_0$, which has a cost of 1, would allow the user to retry entering the server module. The resulting model $\M_3$ after applying this angelic strategy is shown in Figure \ref{fig:example2}. 
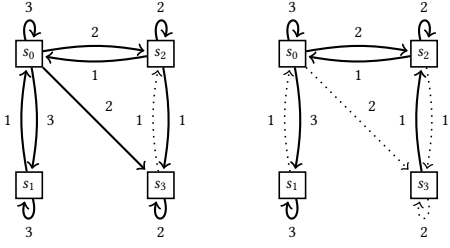
\begin{figure}[h!]
\centering
 \begin{tikzpicture}[->,shorten >=1pt,auto,node distance=3cm and 3cm, semithick,minimum size=0.5cm, scale=0.7, transform shape]

\node[rectangle, draw=black, fill=white]  (A1) at (0,0) {$s_0$};
\node[rectangle, draw=black, fill=white] (B1) at (0,-2.5) {$s_1$};
\node[rectangle, draw=black, fill=white]  (C1) at (2.5,0)  {$s_2$};
\node[rectangle, draw=black, fill=white] (D1) at (2.5,-2.5) {$s_3$};

\path[->,thick] (A1) edge[bend left=10] node {3} (B1); 
\path[->,thick] (A1) edge[bend left=10] node {2} (C1);
\path[->,thick] (B1) edge[bend left=10] node {1} (A1);
\path[->,thick] (C1) edge[bend left=10] node {1} (A1);
\path[->,thick] (C1) edge[bend left=10] node {1} (D1);
\path[->,thick] (A1) edge node {2} (D1); 
\path[->,dotted] (D1) edge[bend left=10] node {1} (C1); 

\path[->,thick] (D1) edge[loop below] node[align=left,pos=.5] {2} (D1);
\path[->,thick] (B1) edge[loop below] node[align=left,pos=.5] {3} (B1);
\path[->,thick] (A1) edge[loop above] node[align=left,pos=.5] {3} (A1);
\path[->,thick] (C1) edge[loop above] node[align=left,pos=.5] {2} (C1);




\node[rectangle, draw=black, fill=white] (A2) at (5,0) {$s_0$};
 
\node[rectangle, draw=black, fill=white] (B2) at (5,-2.5) {$s_1$};
\node[rectangle, draw=black, fill=white] (C2) at (7.5,0) {$s_2$};
\node[rectangle, draw=black, fill=white] (D2) at (7.5,-2.5) {$s_3$};

\path[->,thick] (A2) edge[bend left=10] node {3} (B2); 
\path[->,thick] (A2) edge[bend left=10] node {2} (C2);
\path[->,dotted] (B2) edge[bend left=10] node {1} (A2);
\path[->,thick] (C2) edge[bend left=10] node {1} (A2);
\path[->,dotted] (C2) edge[bend left=10] node {1} (D2);
\path[->,dotted] (A2) edge node {2} (D2); 
\path[->,thick] (D2) edge[bend left=10] node {1} (C2); 

\path[->,dotted] (D2) edge[loop below] node[align=left,pos=.5] {2} (D2);
\path[->,thick] (B2) edge[loop below] node[align=left,pos=.5] {3} (B2);
\path[->,thick] (A2) edge[loop above] node[align=left,pos=.5] {3} (A2);
\path[->,thick] (C2) edge[loop above] node[align=left,pos=.5] {2} (C2);

\node at ($(B2)!0.5!(D2)-(0,1.5)$) {};

\end{tikzpicture}
\caption{Model $\M_3$ (left) and  $\M_4$ (right),
obtained from changing $\M_1$ and $\M_2$ (Figure \ref{fig:example}), respectively.}
\label{fig:example2}
\end{figure}
Notice that the angel does not ensure the traveller will move out of the error state, i.e., 
 $(\M_3,s_1) \not \models \angel  1 \mathsf{X} \,\neg\, error$. That is because angelic strategies can create new possibilities for the traveller, but they cannot force her to take them.  However, we can easily check  that 
  $\angel  1 \mathsf{G} (error \to \Diamond \,\neg\, error)$ holds in all states of model $\M_1$, that is, the angel has a strategy to ensure that every time the traveller enters a failure state, she can leave it in one step. Indeed, it is enough for the angel to restore the transition from $s_1$ and $s_0$, resulting in model $\M_3$, to satisfy $\mathsf{G} (error \to \Diamond \,\neg\, error)$. 


\end{example}

 \subsection{Strategic Update Logic}
Having defined  separate logics for the demon and the angel, we now combine the two, allowing them to cooperate or compete with one another. In this, we take inspiration from logics for MAS (like alternating-time temporal logic (ATL) \cite{alur2002}, coalition logic (CL) \cite{pauly02}, and strategy logic (SL) \cite{mogavero10}), and assume that the angel and the demon execute their actions \textit{concurrently}. Our new modalities are inspired by those of ATL and CL, $\langle \! [ C ] \! \rangle \varphi$, that mean `there is a joint strategy of agents in coalition $C$ such that no matter what agents outside of the coalition do, $\varphi$ holds'. In our case, agents with strategies are the angel and the demon.  

\begin{definition}[Update Model Paths]
    For a countably infinite sequence $\pi$ of pointed models, we say that $\pi$ is an \textit{update model path with costs $n$ and $m$} iff  for all $i \in \mathbb{N}$ we have that $\pi(i) \overset{n,m}{\Rightarrow} \pi(i+1)$. 
Given an angelic $n$-strategy $\angstrat$ and a demonic $m$-strategy $\strat$, an update model path $\pi$ with costs $n$ and $m$ is compatible with the strategies iff  for all $i \in \mathbb{N}$, we have $\pi^\M (i+1) = (\pi^\M (i) \setminus \strat (\pi(i))) \cup \angstrat (\pi(i))$.
Finally, $Out(\angstrat, \strat, (\M,s))$ is the set of update model paths starting at $(\M,s)$ and compatible with strategies $\angstrat$ and $\strat$.
\end{definition}

\begin{definition}[Strategic Update Logic]
    \emph{Strategic Update Logic} (SUL) is defined similarly to SDL and SCL, with the difference that modalities for strategic demonic or angelic operators are substituted with formulas $\esetstrat n m \psi$, where $n,m \geqslant 0$, and $C \subseteq \{\pentacle, \angelsymbol\}$. In particular, we can have four variations of the formula: 1) $C = \{\pentacle\}$, and we will write $\demonstrat n m \psi$ meaning that `there is a demonic $n$-strategy such that for all angelic $m$-strategies and all moves of the traveller, $\psi$ holds'; 2) $C = \{\angelsymbol\}$, and we will write $\angelstrat n m \psi$ with the meaning as in the previous case with demonic and angelic strategies swapped; 3) $C = \{\pentacle, \angelsymbol\}$, which we will write as ${\angeldemonstrat n m \psi}$ meaning `there is an angelic $n$-strategy and a demonic $m$-strategy such that for all moves of the traveller, $\psi$ holds'; and finally 4) $C = \emptyset$, denoted as $\emptystrat n m \psi$ with the meaning as in the previous case with existential quantifiers swapped for universal ones. 
    As usual, we will denote the dual of the combined strategic operator as $\asetstrat n m \psi := \lnot \esetstrat n m \lnot \psi$.
\end{definition}



 \begin{definition}[SUL Semantics]
    Let $(\M,s)$ be a pointed model. 
    We present only cases of $\angelstrat n m \psi$ and $\angeldemonstrat n m \psi$, and the semantics for $\demonstrat n m$ and $\emptystrat n m \psi$ are defined analogously by swapping demonic and angelic strategies, as well as quantifiers.
    
$\begin{array}{l l l}
   (\M,s) \models \angelstrat n m \psi &  \text{iff} & \text{there is an $n$-strategy }  \angstrat \text{ s.t. }\\ 
   &&\text{for all $m$-strategies } \strat \text{ and} \\
   && \text{for all } \pi \in Out(\angstrat, \strat, (\M,s))\\
   &  &  \text{we have that } \pi\models \psi\\
      (\M,s) \models \angeldemonstrat n m \psi &  \text{iff} & \text{there is an $n$-strategy }  \angstrat \text{ and }\\ 
   &&\text{there is an $m$-strategy } \strat \text{ s.t.} \\
   && \text{for all } \pi \in Out(\angstrat, \strat, (\M,s))\\
   &  &  \text{we have that } \pi\models \psi\\
\end{array}$

\end{definition}

\begin{example}{(Access control, cont.)}
 Cooperation between the angel and demon allows for a dynamic, synchronised access control to the system. 
 Let us consider model $\M_2$ in Figure \ref{fig:example}. The security engineer, the demon, on her own can not make the user leave the admin module, i.e. $(\M_2,s_3) \not\models \demonstrat
 2 2 \mathsf{F} \,\neg \, admin$ as the angel can play the strategy of not modifying the model.  However, this goal can be achieved when both the demon and the angel cooperate. Particularly, in the first step, the angel creates transition $s_3 \to s_2$, and the demon chooses a $0$-submodel (i.e. does nothing), and in the next step of the game the demon removes the self-loop at $s_3$. This strategy results in model $\M_4$ in Figure \ref{fig:example2}. 
 Thus, we have that  
   $(\M_2,s_3) \models \angeldemonstrat
 2 2 \mathsf{F} \,\neg \, admin$, as the only thing the user can do once the model is updated to $(\M_4,s_3)$ is to move to state $s_2$. 

 With SUL we can also capture adversarial interactions. For instance, the angel can collaborate with a malicious attacker in $(\M_2,s_1)$ to enable access to unauthorized states, creating vulnerabilities in the system. For example, $(\M_2,s_1) \not \models \demonstrat 2 2 \mathsf{G} \,\neg \, admin$. Indeed, to prevent the attacker aided by the angel from reaching state $s_3$ from $s_1$, the demon can remove the transition $s_0 \to s_2$, while at the same time the angel builds a bridge $s_1\to s_0$. Now, if the attacker is in the state $s_0$, the angel can, for example, add a bridge $s_0 \to s_3$ and no matter what the demon does at the same time, the attacker will get access to the admin state. 
 However, the security engineer can prevent this attack if the angel has fewer resources. We can easily check that  $(\M_2,s_1) \models \demonstrat 2 1 \mathsf{G} \,\neg \, admin$. Indeed, the demon can remove the transition from $s_0$ to $s_2$, and no matter what the angel does with 1 resource, she cannot restore a path to state $s_3$.


\end{example}

 \section{Expressivity}
 \label{sec:exp}

 In this section, we compare our new logics to each other as well as to established logics in the literature. 

\begin{definition}[Expressivity]
Let $\mathcal{L}_1$ and $\mathcal{L}_2$ be two languages, and let $\varphi \in \mathcal{L}_1$ and $\psi \in \mathcal{L}_2$. 
We say that $\varphi$ and $\psi$ are \emph{equivalent}, when for all models $(\M,s)$: $(\M,s) \models \varphi$ if and only if $(\M,s) \models \psi$.

If for every $\varphi \in \mathcal{L}_1$ there is an equivalent $\psi \in \mathcal{L}_2$, we write $\mathcal{L}_1 \preccurlyeq \mathcal{L}_2$ and say that $\mathcal{L}_2$ is \emph{at least as expressive as} $\mathcal{L}_1$. We write $\mathcal{L}_1 \prec \mathcal{L}_2$ iff $\mathcal{L}_1 \preccurlyeq \mathcal{L}_2$ and $\mathcal{L}_2 \not \preccurlyeq \mathcal{L}_1$, and we say that $\mathcal{L}_2$ is \emph{strictly more expressive than} $\mathcal{L}_1$. Finally, if $\mathcal{L}_1 \not \preccurlyeq \mathcal{L}_2$ and $\mathcal{L}_2 \not \preccurlyeq \mathcal{L}_1$, we say that $\mathcal{L}_1$ and $\mathcal{L}_2$ are \emph{incomparable} and write $\mathcal{L}_1 \not \approx \mathcal{L}_2$.
\end{definition}

\textbf{{Computation Tree Logic.} }We start by showing that all of SDL, SCL, and SUL are strictly more expressive than the classic \textit{Computation Tree Logic} (CTL) \cite{ctl}. For the proof that our logics are at least as expressive s CTL, we argue, similarly to the argument for OL \cite{CattaLM23}, that CTL is a fragment of all of SDL, SCL, and SUL. To show that there are properties that our logics can express while CTL cannot, we use the model changes.

\begin{theorem}
\label{thm:ctlexp}
    CTL $\prec$ SDL, CTL $\prec$ SCL, and CTL $\prec$ SUL. 
\end{theorem}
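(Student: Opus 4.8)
\emph{Proof idea.} I would prove each of the three statements by splitting it into the two halves $\mathsf{CTL}\preccurlyeq\mathcal{L}$ and $\mathcal{L}\not\preccurlyeq\mathsf{CTL}$, for $\mathcal{L}\in\{\mathsf{SDL},\mathsf{SCL},\mathsf{SUL}\}$.

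\emph{Embedding CTL.} I would define a translation $t(\cdot)$ that is the identity on atoms, commutes with the Booleans and with the temporal operators $\nextt,\until,\release$, and sends the universal path quantifier to a resource-$0$ strategic operator and the existential one to its dual: $t(\mathsf{A}\psi):=\estrat{0}{t(\psi)}$ and $t(\mathsf{E}\psi):=\astrat{0}{t(\psi)}$ for SDL, $\angel{0}{}$ and $\allangel{0}{}$ for SCL, and $\angeldemonstrat{0}{0}{}$ and its dual for SUL. The key point, already noted in the excerpt for $\square$ and $\Diamond$, is that since $\mathcal{C}$ is $\mathbb{N}^+$-valued, the only strategy (profile) of resource bound $0$ is the one that adds/removes no edge at any reachable pointed model; hence a resource-$0$ strategic operator keeps the model fixed and its set of outcomes is exactly the set of infinite $\to$-paths from $s$, so it behaves precisely like the matching CTL path quantifier. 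A routine induction on the CTL formula then gives $(\M,s)\models_{\mathsf{CTL}}\varphi$ iff $(\M,s)\models t(\varphi)$; for the $\mathsf{E}$ case one uses the LTL dualities $\neg\nextt\chi\equiv\nextt\neg\chi$, $\neg(\chi_1\until\chi_2)\equiv\neg\chi_1\release\neg\chi_2$, $\neg(\chi_1\release\chi_2)\equiv\neg\chi_1\until\neg\chi_2$ to see that $\astrat{0}{}$ (resp. $\allangel{0}{}$, resp. the SUL dual) realises the existential path quantifier. This yields $\mathsf{CTL}\preccurlyeq\mathsf{SDL},\mathsf{SCL},\mathsf{SUL}$.

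\emph{Strictness.} Here I would exploit the fact that the CTL satisfaction relation never inspects the cost function, so any two models over the same $(S,\to,\mathcal{V})$ satisfy exactly the same CTL formulae; hence it suffices to separate such a pair in each logic. For SDL take the models over $S=\{s_0,s_1\}$, $\to=\{(s_0,s_0),(s_0,s_1),(s_1,s_1)\}$, $\mathcal{V}(p)=\{s_1\}$, where $\M$ sets $\mathcal{C}(s_0,s_1)=1$ and $\M'$ sets $\mathcal{C}(s_0,s_1)=2$ (all other costs $1$). Then $(\M,s_0)\models\estrat{1}{\mathsf{G}\neg p}$ — the demon deletes $s_0\to s_1$ on the first move and removes nothing afterwards, so every compatible path stays at $s_0$ — while $(\M',s_0)\not\models\estrat{1}{\mathsf{G}\neg p}$, since with budget $1$ the demon can never remove $s_0\to s_1$ and the traveller may always step to $s_1$; thus $\estrat{1}{\mathsf{G}\neg p}$ has no CTL equivalent. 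For SUL the same pair works with $\demonstrat{1}{0}{\mathsf{G}\neg p}$, as a resource-$0$ angel cannot interfere. For SCL I would take the models over $S=\{s_0,s_1\}$, $\to=\{(s_0,s_0),(s_1,s_1),(s_1,s_0)\}$, $\mathcal{V}(p)=\{s_1\}$, again with $\mathcal{C}(s_0,s_1)$ equal to $1$ in $\M$ and $2$ in $\M'$, and the separating formula $\angel{1}{\nextt\Diamond p}$ with $\Diamond p:=\allangel{0}{\nextt p}$: when the edge is cheap the angel adds $s_0\to s_1$, after which every one-step successor of $s_0$ has a $p$-successor, so $\nextt\Diamond p$ is forced along every traveller response; when it is expensive the angel with budget $1$ can add nothing, $s_0$ keeps only its self-loop, and $\Diamond p$ fails at the next state. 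Combining the two directions gives $\mathsf{CTL}\prec\mathsf{SDL}$, $\mathsf{CTL}\prec\mathsf{SCL}$, and $\mathsf{CTL}\prec\mathsf{SUL}$.

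\emph{Main obstacle.} The delicate part is the SCL separation. Seriality guarantees that the angel can never force the traveller off the self-loop at $s_0$, so a separating formula cannot be an $\angel{1}{}$ of a plain LTL formula over atoms: anything guaranteed along the adversarial path $s_0,s_0,\dots$ depends only on the atoms of $s_0$ and is insensitive to edge additions. This is why the separating formula must nest a modal subformula — here $\Diamond p$ — whose truth the angel can engineer by construction, and the work lies in checking carefully that $\angel{1}{\nextt\Diamond p}$ really holds on the cheap model for \emph{every} response of the traveller and fails on the expensive one. The resource-$0$ collapse used in the embedding is routine, but one should avoid treating $\neg\psi$ as a path formula and instead pass through the LTL dualities when handling the existential path quantifier.
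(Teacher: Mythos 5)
Your proposal is correct and follows essentially the same route as the paper: CTL is embedded via the resource-$0$ collapse of the strategic operators (exactly the paper's translation with $\estrat{0}$, $\angel{0}$, $\angeldemonstrat{0}{0}$ and their duals), and strictness is obtained from pairs of pointed models that CTL cannot distinguish but the new logics can. The only, immaterial, difference is in the separating witnesses: the paper distinguishes the SCL/SUL pair by an unreachable state that the angel can make reachable, whereas you vary the cost of the missing edge (and for SUL reuse the demonic formula with a budget-$0$ angel); both exploit features invisible to CTL, and your concrete examples check out against the seriality and cost constraints of the semantics.
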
 

\begin{proof}
    To see that CTL is a fragment of our logics it is enough to define a truth-preserving translation function $t$ from CTL to our logics. We omit Boolean cases as they are immediate. Now, to deal with CTL path quantifiers, we can employ demonic and angelic strategies over 0 resources. In particular, $t(\mathsf{A X} \varphi) = \heartsuit \nextt t(\varphi)$ and $t(\mathsf{A} (\varphi \until \psi)) = \heartsuit  (t(\varphi) \until t(\psi))$, where $\heartsuit \in \{\estrat 0, \angel 0, {\angeldemonstrat 0 0}\}$ depending on the logic in question. It is immediate, by the definition of the semantics, that such a recursive translation is truth-preserving.

    Now, for each of SDL, SCL, and SUL, we show that there are models that they can distinguish and CTL cannot. First, we start with SDL. Consider an SDL formula $\estrat 1 \nextt p$, and 
    assume towards a contradiction that there is an equivalent formula $\varphi$ of CTL. Then, consider models $\M_1$ and $\M_2$ depicted in Figure \ref{fig:CTLvsSDL}.

 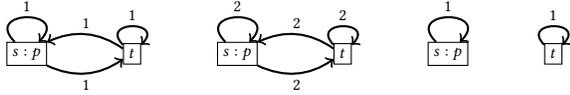
\begin{figure}[h!]
\centering
   \begin{tikzpicture}[scale=0.7, transform shape]
\node[rectangle,draw=black](s) at (0,0) {$s:p$};
\node[rectangle,draw=black](t) at (2,0) {$t$};

\draw[->,thick,bend right] (s) to node[below] {$1$} (t);
\draw[->,thick,bend right] (t) to node[above] {$1$} (s);
\draw [<-,thick](s) to [loop above, out=45, in=135, looseness = 5] node[above] {$1$} (s); 
\draw [<-,thick](t) to [loop above, out=45, in=135, looseness = 5] node[above] {$1$} (t);

\node[rectangle,draw=black](s) at (4,0) {$s:p$};
\node[rectangle,draw=black](t) at (6,0) {$t$};

\draw[->,thick,bend right] (s) to node[below] {$2$}  (t);
\draw[->,thick,bend right] (t) to node[above] {$2$} (s);
\draw [<-,thick](s) to [loop above, out=45, in=135, looseness = 5] node[above] {$2$} (s); 
\draw [<-,thick](t) to [loop above, out=45, in=135, looseness = 5] node[above] {$2$} (t);

\node[rectangle,draw=black](s) at (8,0) {$s:p$};
\node[rectangle,draw=black](t) at (10,0) {$t$};

\draw [<-,thick](s) to [loop above, out=45, in=135, looseness = 5] node[above] {$1$} (s); 
\draw [<-,thick](t) to [loop above, out=45, in=135, looseness = 5] node[above] {$1$} (t);
\end{tikzpicture}
 

\caption{Models $\M_1$ (left), $\M_2$ (middle), and $\M_3$ (right). Atom $p$ is true in states $s$. Cost of all possible edges in $\M_3$ is 1.}
\label{fig:CTLvsSDL}
\end{figure} 

The two models are isomorphic with the only difference that the cost of every edge in $\M_1$ is 1, and the cost of every edge in $\M_2$ is 2. Since in CTL we do not have access to the cost of edges, $(\M_1,s) \models \varphi$ if and only if $(\M_2,s) \models \varphi$.

It is also easy to verify that $(\M_1,s)  \models \estrat 1 \nextt  p$. Indeed, the demon can remove the edge $s \to t$, and thus the traveller will never reach state $t$, where $p$ is false. Also, it is immediate that $(\M_2,s) \not \models \estrat 1 \nextt \lnot p$ because the cost of every edge in the model is 2 and there is nothing the demon can do to modify the model. 

We now turn to SCL and SUL, and consider the following two formulas: ${\angel 1 \nextt \allangel 0 \nextt  \lnot p}$ of SCL and ${\angelstrat 1 0 \nextt  [ \! \langle \angelsymbol^0, \pentacle^0 \rangle \! ] \nextt \lnot p}$ of SUL. These formulas mean that the angel can add an arrow such that in the new updated model, the traveller can reach a $\lnot p$-state.  Assume towards a contradiction that there is some equivalent formula $\varphi$ of CTL for each of them. 
Take model $\M_3$ in Figure \ref{fig:CTLvsSDL} and model $\M_4$, which is like $\M_3$ but contains the single state $s$. 

Model $\M_4$ is just a single state $s$ with a reflexive arrow. Model $\M_3$ is the state $s$ as well as the state $t$, both having only reflexive edges. It is immediate that $(\M_3,s) \models \varphi$ if and only if $(\M_4,s) \models \varphi$.

At the same time, we have that $(\M_4,s) \not \models {\angel 1 \nextt \allangel 0 \nextt  \lnot p}$ and $(\M_4,s) \not \models {\angelstrat 1 0 \nextt  [ \! \langle \angelsymbol^0, \pentacle^0 \rangle \! ] \nextt \lnot p}$ as there is simply no state in model $\M_4$ satisfying $\lnot p$. 
On the other hand, $(\M_3,s) \models {\angel 1 \nextt \allangel 0 \nextt  \lnot p}$, as the angel can add the edge $s \to t$ to make the $\lnot p$-state $t$ accessible for the traveller. In particular, after the angel adds the edge $s \to t$, we check all next-time paths the traveller can take. She can either stay in $s$ or move to $t$. In both cases, ${\allangel 0 \nextt  \lnot p}$ is satisfied, as with 0 resources the angel cannot modify the model anymore, and there a next-time path for the traveller to reach the $\lnot p$-state $t$. A similar reasoning can be used to see that $(\M_3,s) \models {\angelstrat 1 0 \nextt  [ \! \langle \angelsymbol^0, \pentacle^0 \rangle \! ] \nextt \lnot p}$.
\end{proof}

\textbf{Angels and Demons.} We show that once pitched against one another, the new logics of angels and demons are indeed different. In particular, we argue that SDL and SCL are incomparable and that SUL is strictly more expressive than both of them.

\begin{theorem}
\label{thm:SCLvsSDL}
    SDL $\not \approx$ SCL, SDL $\prec$ SUL, and SCL $\prec$ SUL.
\end{theorem}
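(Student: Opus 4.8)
The plan has three parts: prove the inclusions SDL $\preccurlyeq$ SUL and SCL $\preccurlyeq$ SUL by compositional translations; extract from each of SDL and SCL a ``cost-sensitive'' formula witnessing the relevant non-inclusion; and then read off SDL $\prec$ SUL and SCL $\prec$ SUL purely by transitivity of $\preccurlyeq$. For the inclusions, translate $\estrat n \psi$ to the SUL modality $\demonstrat n 0 \psi$ and $\angel n \psi$ to $\angelstrat n 0 \psi$, recursing through the Boolean and temporal cases. Since every cost is a strictly positive integer, a celestial agent forced to budget $0$ can only contribute the empty edge set at every configuration; hence the SUL outcome set in which the angel (resp.\ the demon) has budget $0$ coincides, step by step, with the SDL outcome set of the demonic strategy (resp.\ the SCL outcome set of the angelic strategy), and truth preservation follows by induction on formulas. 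This is the same mechanism the paper already uses to recover $\square$ and $\Diamond$ from $0$-resource operators.

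For SDL $\not\preccurlyeq$ SCL I would reuse $\M_1$ and $\M_2$ from Figure~\ref{fig:CTLvsSDL}: both are the complete digraph on $\{s,t\}$ with $p$ true only at $s$, differing solely in that every edge costs $1$ in $\M_1$ and $2$ in $\M_2$. The key point is that on any model whose transition relation is complete the angel has nothing to add, so the model stays fixed along every play and, by induction on formulas, every SCL formula is equivalent over such models to the CTL formula obtained by reading each angelic operator as the corresponding CTL path quantifier. As $\M_1$ and $\M_2$ carry the same Kripke frame, they satisfy exactly the same CTL formulas, hence the same SCL formulas; but $\estrat 1 \nextt p$ separates them: it holds at $(\M_1,s)$ (the demon deletes $s\to t$, as observed in the proof of Theorem~\ref{thm:ctlexp}) and fails at $(\M_2,s)$ (with budget $1$ and all costs $2$ the demon can modify nothing, so the traveller may step to the $\lnot p$-state $t$). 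Hence $\estrat 1 \nextt p$ has no SCL equivalent.

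For SCL $\not\preccurlyeq$ SDL I would take $\M_3$ from Figure~\ref{fig:CTLvsSDL} (states $\{s,t\}$ with only self-loops, $p$ at $s$) and let $\M_3'$ be its copy in which every edge costs $2$ instead of $1$. Here the dual observation is that on any functional model --- each state of out-degree exactly one --- the demon cannot delete an edge without destroying seriality, so whatever her budget the model stays fixed along every play and SDL likewise reduces to CTL over such models; thus $\M_3$ and $\M_3'$, sharing a Kripke frame, satisfy the same SDL formulas. But the SCL formula $\angel 1 \nextt \allangel 0 \nextt \lnot p$ separates them: at $(\M_3,s)$ the angel can add $s\to t$ at cost $1$, after which $t$ is a $\lnot p$-successor of both $s$ and $t$ --- the only two states the traveller can move into --- so the formula holds (cf.\ the proof of Theorem~\ref{thm:ctlexp}); at $(\M_3',s)$ the angel cannot afford any new edge, the model remains loop-only, and $s$ has no $\lnot p$-successor, so the formula fails. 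Hence $\angel 1 \nextt \allangel 0 \nextt \lnot p$ has no SDL equivalent.

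The two separations together give SDL $\not\approx$ SCL. From SCL $\preccurlyeq$ SUL and SCL $\not\preccurlyeq$ SDL, transitivity rules out SUL $\preccurlyeq$ SDL, so together with SDL $\preccurlyeq$ SUL we get SDL $\prec$ SUL; symmetrically, SDL $\preccurlyeq$ SUL and SDL $\not\preccurlyeq$ SCL exclude SUL $\preccurlyeq$ SCL, giving SCL $\prec$ SUL. The main obstacle I anticipate is the two ``reduces to CTL'' lemmas: one has to verify that the invariant being exploited --- completeness of $\to$ while only the angel acts, functionality of $\to$ while only the demon acts --- is preserved by every move of the relevant agent \emph{and} by every traveller step, so that the inductive hypothesis is available at each nested subformula; and one should make explicit (in line with the paper's remark that the operations preserve seriality) that a demonic strategy may return only edge sets whose removal again yields a serial model, which is precisely what makes the demon powerless on a functional graph regardless of her resource bound.
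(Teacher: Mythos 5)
Your proposal is correct and follows essentially the same route as the paper: embed SDL and SCL into SUL by giving the other celestial agent budget $0$ (you use $\demonstrat{n}{0}$ and $\angelstrat{n}{0}$ where the paper uses $\angeldemonstrat{0}{n}$ and $\angeldemonstrat{n}{0}$, which is an immaterial difference since a $0$-budget agent can only play the empty edge set), reuse $\M_1,\M_2$ for SDL $\not\preccurlyeq$ SCL, and obtain both strictness claims by the same transitivity argument. The only deviation is your witness pair for SCL $\not\preccurlyeq$ SDL --- the cost-doubled copy $\M_3'$ instead of the paper's $\M_3$ versus $\M_4$ --- which is equally valid for the same underlying reason (the demon cannot remove any edge of a functional model without violating seriality), so the difference is cosmetic.
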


\begin{proof}
    We can reuse the arguments from the proof of Theorem \ref{thm:ctlexp}. To see that SDL $\not \preccurlyeq$ SCL, we recall that models $\M_1$ and $\M_2$ are distinguishable by an SDL formula. 
    That no SCL formula can distinguish the two models follows from the fact that the models differ only in the costs of their edges, and since the relations for both models are universal, the angel cannot modify the model.

    To show that SCL $\not \preccurlyeq$ SDL, we recall models $\M_3$ and $\M_4$ from the proof of Theorem \ref{thm:ctlexp} that are distinguishable by an SCL formula. That no SDL formula can distinguish $(\M_3, s)$ and $(\M_4, s)$ is immediate by the semantics of the demonic operator (the demon cannot remove any further edges). Hence, SDL $\not \approx$ SCL.

    Finally, observe that both SDL and SCL are fragments of SUL via a translation $t(\estrat n \varphi) = \angeldemonstrat 0 n t(\varphi)$ and $t(\angel n \varphi) = {\angeldemonstrat n 0} t(\varphi)$, and therefore SDL $ \preccurlyeq$ SUL and SCL $ \preccurlyeq$ SUL.  Hence, we also have that  SCL $\not \preccurlyeq$ SDL implies SUL $\not \preccurlyeq$ SDL, and SDL $\not \preccurlyeq$ SCL implies SUL $\not \preccurlyeq$ SCL. Putting everything together, we conclude that SDL $\prec$ SUL, and SCL $\prec$ SUL.
\end{proof}

\textbf{Relation to Obstruction Logic.}
The semantics of the strategic operators $\langle \dagger^n\rangle \psi$ and $[\dagger^n ]\psi$ of \textit{Obstruction Logic} (OL) \cite{CattaLM23} is similar to the semantics of the SDL modalities $\estrat n \psi$ and $\astrat n \psi$ with the crucial difference that after the demon disables some edges and the traveller makes a move, the edges in OL are restored. Hence, in OL, the changes in the given model are \textit{not permanent}. However, in the proof of SDL $\not \preccurlyeq$ SCL in Theorem \ref{thm:SCLvsSDL}, we used only next-time temporal modalities in the SDL formula. Hence, we can use the same argument and the same pair of models, $\M_1$ and $\M_2$, to show that OL $\not \preccurlyeq$ SCL. For the other direction, i.e., SCL $\not \preccurlyeq$ OL, the argument is similar to the one for SCL $\not \preccurlyeq$ SDL that uses models $\M_3$ and $\M_4$. The same argument implies that SUL $\not \preccurlyeq$ OL.

\begin{theorem}
\label{OLvsSCL}
    OL $\not \approx$ SCL  and SUL $\not \preccurlyeq$ OL.  
\end{theorem}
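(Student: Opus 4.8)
The plan is to derive the statement from the two pairs of models already constructed in the proofs of Theorems~\ref{thm:ctlexp} and \ref{thm:SCLvsSDL}, noting that the single-step SDL/SCL formulae used there transfer verbatim, plus one auxiliary fact: the truth of an OL formula (exactly like that of an SDL formula) at a state $w$ depends only on the submodel \emph{generated} by $w$, i.e.\ on the states reachable from $w$; whereas SCL and SUL can escape the generated submodel by adding an edge. I would split the proof into the three claims $\text{OL}\not\preccurlyeq\text{SCL}$, $\text{SCL}\not\preccurlyeq\text{OL}$, and $\text{SUL}\not\preccurlyeq\text{OL}$.

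\textbf{OL $\not\preccurlyeq$ SCL.} Reuse $\M_1$ and $\M_2$ from Figure~\ref{fig:CTLvsSDL} and the OL formula $\langle\dagger^1\rangle\nextt p$. Since this formula looks only one step ahead, the restoration of edges in OL --- which takes effect only after the traveller's move --- plays no role, so its OL semantics coincides with the SDL semantics of $\estrat 1 \nextt p$. Hence, exactly as in Theorem~\ref{thm:ctlexp}, $(\M_1,s)\models\langle\dagger^1\rangle\nextt p$ (the demon deletes $s\to t$ at cost $1$, so the traveller stays at the $p$-state $s$), while $(\M_2,s)\not\models\langle\dagger^1\rangle\nextt p$ (every edge costs $2$, so the demon can do nothing). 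No SCL formula separates $\M_1$ and $\M_2$, as already noted in Theorem~\ref{thm:SCLvsSDL}: both relations are universal, so the angel has no edge to add and cannot change either model, and an induction on SCL formulae --- which cannot read edge costs --- yields agreement. Thus OL $\not\preccurlyeq$ SCL.

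\textbf{SCL $\not\preccurlyeq$ OL, hence OL $\not\approx$ SCL.} Reuse $\M_3$ (states $s{:}p$ and $t$, each with only a reflexive edge) and $\M_4$ (a single reflexive state $s{:}p$). By the proof of Theorem~\ref{thm:ctlexp}, the SCL formula $\angel 1 \nextt \allangel 0 \nextt \lnot p$ separates $(\M_3,s)$ from $(\M_4,s)$. It remains to show no OL formula does. Since $t$ is unreachable from $s$ in $\M_3$, the submodel of $\M_3$ generated by $s$ is isomorphic to $\M_4$. I would then prove, by a routine induction on OL formulae, that for every pointed model $(\M,w)$ the truth value of an OL formula at $w$ depends only on the submodel of $\M$ generated by $w$: OL dynamics only remove edges, so reachability never grows, and the traveller only traverses existing edges, so from $w$ one never probes an edge or a cost value outside the generated submodel; in particular, for every demonic strategy there is one that never touches unreachable edges, with the same cost and the same set of plays seen from $w$. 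Consequently $(\M_3,s)$ and $(\M_4,s)$ satisfy exactly the same OL formulae, giving SCL $\not\preccurlyeq$ OL; combined with the previous paragraph, OL $\not\approx$ SCL.

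\textbf{SUL $\not\preccurlyeq$ OL.} This is now immediate: by Theorem~\ref{thm:SCLvsSDL} we have SCL $\preccurlyeq$ SUL, so the SUL-translation of $\angel 1 \nextt \allangel 0 \nextt \lnot p$ still separates $(\M_3,s)$ from $(\M_4,s)$, whereas --- by the generated-submodel invariance just described --- no OL formula does. (Equivalently, one can reuse the SUL formula $\angelstrat 1 0 \nextt [ \! \langle \angelsymbol^0, \pentacle^0 \rangle \! ] \nextt \lnot p$ from the proof of Theorem~\ref{thm:ctlexp} directly.) Everything except the generated-submodel invariance of OL is a direct transfer of earlier arguments, so that invariance is the main obstacle: it is a standard structural induction, but the care lies in the cost bookkeeping --- one must argue that restricting a demonic strategy (and the cost function) to the reachable fragment changes neither the cost of the strategy nor the plays observable from the point, so that both directions of the invariance go through.
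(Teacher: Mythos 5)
Your proposal is correct and follows essentially the same route as the paper: the same three claims, the same model pairs $\M_1,\M_2$ and $\M_3,\M_4$, and the same distinguishing formulae, with OL agreeing with SDL on next-time formulae for the first claim. The only difference is that you make explicit a generated-submodel invariance lemma for OL to justify that no OL formula separates $(\M_3,s)$ from $(\M_4,s)$, which the paper leaves implicit by appealing to the argument for SCL $\not\preccurlyeq$ SDL; this is a reasonable and slightly more careful formalisation of the same idea.
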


The relation between OL and SDL is a more nuanced one. In the proof of SDL $\not \preccurlyeq$ OL, we utilise the fact that the model change in SDL is permanent and hence we can `remember' the removed edges, which is impossible in OL, where edges are restored before each new action of the demon.

\begin{restatable}{theorem}{sdlvsol}
\label{thm:SDLvsOL}
    SDL $\not \preccurlyeq$ OL.
\end{restatable}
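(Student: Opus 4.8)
The plan is to exhibit a single SDL formula $\varphi$ and an infinite family of pairs of pointed models $(\M_k, s)$, $(\M'_k, s)$ such that $\varphi$ separates each pair while no OL formula of bounded modal depth does; since every OL formula has some fixed modal depth $k$, this shows no OL formula can be equivalent to $\varphi$. The SDL formula should exploit the feature that model change in SDL is \emph{permanent}: the demon can remove an edge at step $i$ and rely on its absence at every later step, whereas in OL every edge the demon disables is restored before her next move. A natural candidate is something like $\varphi = \estrat{n}\bigl(p_1 \until (p_2 \until (\dots \until p_\ell)\dots)\bigr)$, or more simply $\varphi = \estrat{n}\mathsf{G}\,\psi$ where the demon must ``commit'' early to a deletion in order to guarantee $\psi$ forever, using a resource budget $n$ that is only enough to do the deletion once. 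In OL the same budget $n$ would be restored each round, so paradoxically OL's demon is \emph{more} powerful per-step; the separating example must therefore be one where permanence \emph{helps} the SDL demon do something OL's demon cannot, namely remove an edge ``in advance'' and keep a bounded budget for later obstructions elsewhere.

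Concretely, I would build a model with a ``path'' gadget of length $k$ leading from $s$ to a bad state $t$ (where $p$ is false), together with a separate ``side'' structure that, on every round, forces the OL demon to spend her whole budget just to block the side threat — leaving nothing to cut the edge into $t$ when the traveller finally arrives there. The SDL demon, by contrast, simply deletes the edge into $t$ on the very first round (permanently), and spends her per-round budget on the side threat thereafter. So $(\M_k,s)\models \estrat{n}\mathsf{G}\,p$ in SDL but the OL demon loses. Then I take $\M'_k$ to be a variant where the edge into $t$ is absent from the start, so that both logics agree the demon wins; the two models are bisimilar up to modal depth $k$ for OL-type formulas (because OL can only ``see'' $k$ steps ahead and the only difference is $k$ steps deep), hence indistinguishable by any OL formula of depth $\le k$. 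Letting $k$ range over $\mathbb{N}$ defeats every fixed OL formula.

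The key steps, in order: (i) define the gadget models $\M_k, \M'_k$ precisely, fixing the budget $n$ and the side-threat structure so that the OL demon is provably forced to exhaust her budget each round; (ii) verify $(\M_k,s)\models\varphi$ and $(\M'_k,s)\models\varphi$ in SDL, and that the chosen OL formula-candidate (the OL analogue $\langle\dagger^n\rangle\mathsf{G}\,p$) fails at $(\M_k,s)$ but holds at $(\M'_k,s)$ — or, better, directly prove that \emph{no} OL formula of modal depth $<k$ distinguishes $(\M_k,s)$ from $(\M'_k,s)$ via an Ehrenfeucht--Fra\"iss\'e / bisimulation-style argument tailored to OL's semantics; (iii) conclude: if $\psi\in\mathrm{OL}$ were equivalent to $\varphi$, let $k$ exceed the modal depth of $\psi$; then $\psi$ agrees on $(\M_k,s)$ and $(\M'_k,s)$ while $\varphi$ does not, a contradiction.

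The main obstacle is step (ii): making rigorous the claim that OL formulas of depth $<k$ cannot tell $\M_k$ from $\M'_k$. OL's semantics quantifies over demon deletions \emph{and} traveller moves at each modal step, so the right notion of ``$k$-step indistinguishability'' is not ordinary modal bisimulation but a game-based invariant that is stable under the $\sacc{n}$-style transitions with restoration. I would need to set up a relation $Z_k$ between (sub)model-state pairs such that $Z_k$ is preserved (with index decremented) by every demon-then-traveller round on both sides, and such that $Z_k$-related states satisfy the same atoms — then argue by induction on formula depth. Getting the gadget to simultaneously (a) force the OL demon to waste her budget and (b) keep the $\M_k$/$\M'_k$ difference invisible for $k$ rounds is the delicate balancing act; the side-threat structure has to be ``self-similar'' enough that after any legal round the remaining position is again a $Z_{k-1}$-related pair.
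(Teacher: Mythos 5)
There is a genuine gap, and it sits at the very core of the argument's structure. In step (ii) you verify that your SDL formula $\varphi$ holds at \emph{both} $(\M_k,s)$ and $(\M'_k,s)$ (the SDL demon wins in both, since in $\M'_k$ the bad edge is simply absent). But then $\varphi$ does not separate the pair, and the contradiction you invoke in step (iii) --- ``$\psi$ agrees on the two models while $\varphi$ does not'' --- never materialises: an OL formula $\psi$ that agrees on both models is perfectly consistent with being equivalent to a $\varphi$ that also agrees on both. Showing that the \emph{naive OL analogue} $\langle\dagger^n\rangle\mathsf{G}\,p$ fails at $(\M_k,s)$ only rules out that one candidate translation, not every OL formula. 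To make the argument go through, the SDL formula itself must take different truth values on the two models while \emph{all} OL formulas (with suitably bounded parameters) take the same value; your gadget does not deliver this. A second, independent problem is the choice of modal depth as the indistinguishability parameter: OL contains until/eventually-style temporal operators, so a formula of fixed small depth can already see arbitrarily far down your length-$k$ path --- indeed, by your own observation, the fixed-depth formula $\langle\dagger^n\rangle\mathsf{G}\,p$ \emph{does} distinguish $(\M_k,s)$ from $(\M'_k,s)$, which directly refutes the claim that no OL formula of depth $<k$ can. So the ``better'' alternative in step (ii) is not merely hard to prove; it is false for your models.

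The paper's proof fixes both issues by parameterising on the largest \emph{resource bound} $n$ occurring in the hypothetical OL equivalent, not on modal depth, and by making the two models differ in the \emph{width} of a fan rather than in the presence of a single deep edge. Concretely, it takes $\estrat{1}\mathsf{F}\,p$ and two models consisting of a chain $s_1\to\dots\to s_{n+1}$ ending in a fan of $n+2$ versus $n+3$ successors $t_1,\dots$, with $p$ true only at $t_1$ and all costs $1$. Permanence is exploited exactly as you intend, but quantitatively: while the traveller walks the chain, the SDL demon removes one fan edge per round, so in the $(n+2)$-fan model only $t_1$ survives when the traveller arrives (formula true), whereas in the $(n+3)$-fan model one non-$p$ edge always remains (formula false). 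On the OL side, removals are restored every round, so the only relevant demon moves are at $s_{n+1}$, where a budget of at most $n$ against $n+2$ or $n+3$ fan edges always leaves a non-$p$ successor and matching options in the two models; an induction on the OL formula then shows the two pointed models agree on every OL formula whose bounds do not exceed $n$, giving the contradiction. If you want to salvage your construction, you would need to redesign it so that the SDL formula genuinely differs on the pair and so that the invariance argument is phrased against the resource bounds of OL rather than its modal depth.
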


\begin{proof}
    We provide the idea behind the proof, and the full proof can be found in Technical Appendix. Consider an SDL formula $\estrat 1 \mathsf{F} p$, and assume towards a contradiction that there is an equivalent formula $\varphi$ of OL. Since formulas of OL are finite, and each OL modality has a finite resource bound, we can assume that $n$ is the greatest $n \in \mathbb{N}$ appearing in $\varphi$. Now consider models $\M_{n+2}$ and $\M_{n+3}$ in Figure \ref{fig:pspaceex3}.
    \begin{figure}[h!]
\centering
   \begin{tikzpicture}[scale=0.7, transform shape]
\node[rectangle,draw=black](s) at (0,0) {$s_1$};
\node[rectangle,draw=black](a10) at (0,-1) {$s_2$};
\node(dots) at (0,-2) {$...$};
\node[rectangle,draw=black](sn0) at (0,-3) {$s_n$};
\node[rectangle,draw=black](sn) at (0,-4) {$s_{n+1}$};
\node[rectangle,draw=black](t1) at (-2,-5) {$t_1: p$};
\node[rectangle,draw=black](t2) at (-1,-5) {$t_2$};
\node(dots2) at (0,-5) {$...$};
\node[rectangle,draw=black](tn1) at (1,-5) {$t_{n+1}$};
\node[rectangle,draw=black](tn2) at (2,-5) {$t_{n+2}$};

\draw[->,thick] (s) to (a10);
\draw[->,thick] (a10) to (dots);
\draw[->,thick] (dots) to (sn0);
\draw[->,thick] (sn0) to (sn);


\draw[->,thick] (sn) to  (t1);
\draw[->,thick] (sn) to (t2);
\draw[->,thick] (sn) to (dots2);
\draw[->,thick] (sn) to (tn1);
\draw[->,thick] (sn) to (tn2);
\end{tikzpicture}
\hspace{0.6cm}
   \begin{tikzpicture}[scale=0.7, transform shape]
\node[rectangle,draw=black](s) at (0,0) {$s_1$};
\node[rectangle,draw=black](a10) at (0,-1) {$s_2$};
\node(dots) at (0,-2) {$...$};
\node[rectangle,draw=black](sn0) at (0,-3) {$s_n$};
\node[rectangle,draw=black](sn) at (0,-4) {$s_{n+1}$};
\node[rectangle,draw=black](t1) at (-2,-5) {$t_1: p$};
\node[rectangle,draw=black](t2) at (-1,-5) {$t_2$};
\node(dots2) at (0,-5) {$...$};
\node[rectangle,draw=black](tn1) at (1,-5) {$t_{n+2}$};
\node[rectangle,draw=black](tn2) at (2,-5) {$t_{n+3}$};

\draw[->,thick] (s) to (a10);
\draw[->,thick] (a10) to (dots);
\draw[->,thick] (dots) to (sn0);
\draw[->,thick] (sn0) to (sn);


\draw[->,thick] (sn) to  (t1);
\draw[->,thick] (sn) to (t2);
\draw[->,thick] (sn) to (dots2);
\draw[->,thick] (sn) to (tn1);
\draw[->,thick] (sn) to (tn2);
\end{tikzpicture}
 

\caption{Models $\M_{n+2}$ (left) and $\M_{n+3}$ (right) with reflexive arrows for $t$-states omitted. The cost of all edges is 1. }
\label{fig:pspaceex3}
\end{figure}
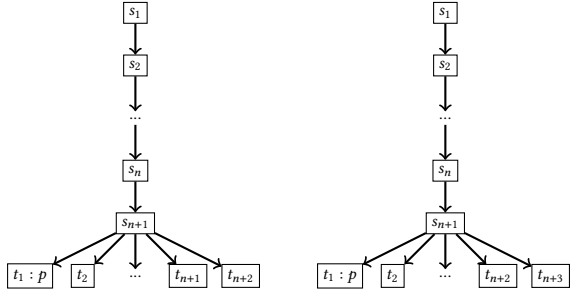 
The models are quite similar, with the only difference being that $\M_{n+2}$ has one less of $t$-states. For both models, $p$ is true only in state $t_1$. 

To show that $(\M_{n+2},s_1) \models \estrat 1 \mathsf{F} p$ and $(\M_{n+3},s_1) \not \models \estrat 1 \mathsf{F} p$, we argue that, while the traveller moves along the $s$-states, at each step in $\M_{n+2}$ the demon can remove one of the $s_{n+1} \to t_i$ edges with $i >1$. Such a gradual removal results in the situation, when in state $s_{n+1}$ the traveller has only one choice, namely to enter state $t_1$ thus satisfying $\mathsf{F} p$. In $\M_{n+3}$, we have one additional $t$-state, and therefore the demon does not have enough steps to ensure that only the $p$-state $t_1$ is available. Hence, the traveller can enter a  $t$-state that does not satisfy $p$ and violate $\mathsf{F} p$. Then we argue that $\varphi$ cannot distinguish the models, as in OL edges are restored after each game step, and hence the demon should make a move in state $s_{n+1}$ in one of the models that is not replicable in the other model. This is impossible, since the resource bound $n$ is too low.
\end{proof}

\textbf{The Expressivity Landscape.} An overview of the expressivity results is presented in Figure \ref{fig:expressivity}.

\begin{figure}[h!]
\centering
\begin{tikzpicture}[scale=0.7, transform shape]
\node (SDL) at (-4,-2) {SDL};
\node (ol) at (0,-2) {OL};
\node (SCL) at (4,-2) {SCL};
\node (sul) at (0,-5) {SUL};
\node (ctl) at (0,0.5) {CTL};

\draw[thick, ->] (ctl) to [bend right, looseness=0.8] node[sloped, anchor=center, above] {\small{Thm. \ref{thm:ctlexp}}}   (SDL);
\draw[thick, ->] (ctl) to  node[ anchor=center, left] {\small{\cite{CattaLM23}}}  (ol);
\draw[thick, ->] (ctl) to [bend left, looseness=0.8] node[sloped, anchor=center, above] {\small{Thm. \ref{thm:ctlexp}}} (SCL);


\draw[thick,->] (ol) to [bend left, looseness=0.5] node[sloped, anchor=center, below left] {\small{?}} (SDL);

\draw[thick,->] (SDL) to [bend left, looseness=0.5] node[sloped, anchor=center, strike out,draw,-, label = above:{\small{Thm. \ref{thm:SDLvsOL}}}] {} (ol);

\draw[thick, <->] (ol) to node[anchor=center,  strike out,draw,-, label = above:{\small{Thm. \ref{OLvsSCL}}}] {} (SCL);

\draw[thick, <->] (SDL) to [ looseness=0.7, out = 310, in = 230] node[anchor=center,  strike out,draw,-, label = above:{\small{Thm. \ref{thm:SCLvsSDL}}}] {} (SCL);

\draw[thick, ->] (SDL) to [bend right, looseness=0.8] node[sloped, anchor=center, below] {\small{Thm. \ref{thm:SCLvsSDL}}}   (sul);

\draw[thick, ->] (SCL) to [bend left, looseness=0.8] node[sloped, anchor=center, below] {\small{Thm. \ref{thm:SCLvsSDL}}}   (sul);

\draw[thick,->] (sul) to [looseness=0.7, out = 135, in = 225] node[sloped, near start, anchor=center, strike out,draw,-, label = below left:{\small{Thm. \ref{OLvsSCL}}}] {} (ol);

\draw[thick,->] (ol) to [looseness=0.7, out = 315, in = 45] node[near end, anchor=center, right] {\small{?}} (sul);

\end{tikzpicture}
\caption{The expressivity results. An arrow from $\mathcal{L}_1$ to $\mathcal{L}_2$ means $\mathcal{L}_1 \prec\mathcal{L}_2$. A strike-out arrow from $\mathcal{L}_1$ to $\mathcal{L}_2$ depicts $\mathcal{L}_1 \not \preccurlyeq \mathcal{L}_2$. Open problems are denoted with question marks.}
\label{fig:expressivity}
\end{figure}
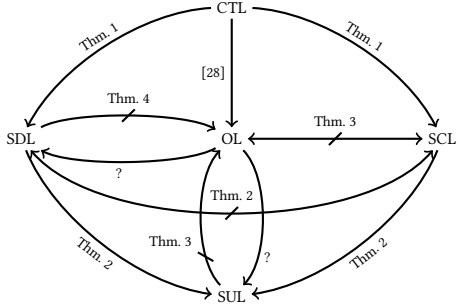

We leave the questions of whether OL $\not \preccurlyeq$ SDL and OL $\not \preccurlyeq$ SUL for future work and conjecture that it is indeed the case. In conclusion, we would also like to point out an interesting fact about the relationship between OL and SUL. Given  $(\M,s)$ and $\varphi$ of OL, if $(\M,s) \models \varphi$, then there is a formula $\psi$ of SUL s.t. $(\M,s)\models \psi$. The argument is directly based on the semantics of the two logics, with, e.g., $(\M,s) \models \langle \dagger^n \rangle \chi$ implying $(\M,s) \models \angeldemonstrat n n \chi$. The latter means that for each demonic strategy in OL, there is a joint demonic and angelic strategy in SUL over the same resource bounds reaching the same goal. Intuitively, to model the OL operator, the angel and the demon can cooperate where the demon removes edges, and the angel restores all or a subset of them in the next turn.

 \section{Model Checking}
 \label{sec:mc}

In this section, we study the model checking problem for SDL, SCL, and SUL. In particular, we show that for the first two logics, the problem is PSPACE-complete, and for SUL, the problem is in EXPSPACE. We also mention that the problem is PSPACE-complete for the next-time fragment of SUL. 

 \begin{definition}[Model Checking]
     Given a pointed model $(\M,s)$ and a formula $\varphi$, \emph{the model checking problem} consists in computing whether $(\M,s) \models \varphi$.
 \end{definition}

 Whenever necessary, we assume that $\varphi$ is in \textit{negation normal form} (NNF), meaning that in $\varphi$, negations only appear in front of atoms. It is straightforward to show that each formula $\varphi$ of SDL, SCL, and SUL, can be equivalently rewritten into $\varphi'$ in NNF using the propositional equivalences, duals (like $\astrat{n} \psi$ in the case of SDL), and temporal equivalences $\neg( \nextt \varphi) \leftrightarrow \nextt \neg \varphi$, $\neg (\varphi_1 \until \varphi_2) \leftrightarrow \neg \varphi_1 \release \neg \varphi_2$, and $\neg (\varphi_1 \release \varphi_2) \leftrightarrow \neg \varphi_1 \until \neg \varphi_2$. The size of the formula $\varphi'$ in NNF is at most linear in the size of the original $\varphi$.


\begin{restatable}{theorem}{sdlmc}
        Model checking SDL is PSPACE-complete. 
\end{restatable}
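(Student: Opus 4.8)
The plan is to prove membership in PSPACE by a recursive algorithm using polynomial space, and PSPACE-hardness by a reduction from the validity of quantified Boolean formulae (TQBF). Two structural observations drive the upper bound. First, demonic strategies are \emph{positional}: by Definition~\ref{def:demstrat} a strategy is a function of the current pointed model only, with no dependence on the history of the play. Second, along any decreasing model path the edge relation can only shrink, and every step that removes anything removes at least one of the $|\xrightarrow{\M}|$ edges of $\M$; hence a play splits into at most $|\xrightarrow{\M}|+1$ maximal blocks on which the (sub)model stays constant, and the demon plays the empty set at every position of such a block except possibly its last. Moreover, every submodel of $\M$ is a subset of $\xrightarrow{\M}$, so it is stored in space polynomial in $|\M|$.

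I would implement a recursive procedure $\mathrm{Check}(\M',t,\varphi)$ deciding $(\M',t)\models\varphi$ for submodels $\M'$ of $\M$. Boolean cases are immediate and reuse space. For $\estrat{n}\nextt\varphi'$, enumerate --- one at a time, by iterating over bit-vectors on $\xrightarrow{\M'}$, which takes exponential time but only polynomial space --- every $A$ with $\mathcal{C}(A)\leqslant n$, and accept as soon as one is found all of whose $\xrightarrow{\M'\setminus A}$-successors $t'$ of $t$ satisfy $\mathrm{Check}(\M'\setminus A,t',\varphi')$. The core cases are $\varphi_1\until\varphi_2$ and $\varphi_1\release\varphi_2$. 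For $\until$, I claim that if the demon has any winning $n$-strategy from $(\M',t)$, then she has one under which every play reaches a $\varphi_2$-state within $K:=(|S^\M|+1)\cdot(|\xrightarrow{\M}|+1)$ steps: otherwise some play stays on a single constant-submodel block for more than $|S^\M|$ positions, so by positionality it visits the same pointed model $(\M'',u)$ at two positions of that block at which the strategy prescribes the empty set, and splicing the loop between those two positions yields an infinite play compatible with the strategy that never satisfies $\varphi_2$, contradicting that the strategy wins. Consequently $\estrat{n}(\varphi_1\until\varphi_2)$ is decided by unfolding the game to depth $K$: at each level existentially pick $A$ with $\mathcal{C}(A)\leqslant n$, universally pick a successor, recurse, and call $\mathrm{Check}$ on $\varphi_1$ and $\varphi_2$ at the visited configuration. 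The $\release$ case is analogous: the ``reach a $\varphi_1$-state while $\varphi_2$ holds'' alternative is again a $K$-bounded reachability game, while the ``keep $\varphi_2$ forever'' alternative reduces, by the same two observations, to forcing within $K$ steps a configuration from which the demon can keep the traveller in $\varphi_2$-states indefinitely --- which is settled by a further recursion of depth at most $|\xrightarrow{\M}|$ (each call commits to deleting one more edge; the leaves are CTL-style $\mathsf{AG}\,\varphi_2$ tests on a fixed submodel).

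On any root-to-leaf branch of this computation there are at most $|\varphi|$ strategic operators, each contributing at most $K+|\xrightarrow{\M}|=\mathrm{poly}(|\M|)$ nested frames, and each frame stores only a submodel, a state, a candidate edge set, a step counter and the resource bound --- altogether polynomial in $|\M|+|\varphi|$; hence the procedure runs in polynomial space (its running time is exponential, which is immaterial). For PSPACE-hardness I would reduce from TQBF. Given $Q_1 x_1\cdots Q_k x_k\,\phi$ with $\phi$ in CNF, I would build a polynomial-size model in which a chain of nested modalities $\estrat{1}\nextt$ (for $\exists$) and $\astrat{1}\nextt$ (for $\forall$) realises the quantifier alternation --- at the gadget for $x_i$ the pertinent strategic choice is which of two designated edges is deleted, encoding the truth value assigned to $x_i$ --- followed by a verification phase that walks through the clauses of $\phi$ and checks, clause by clause, whether some literal's designated edge has survived. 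The essential point is that deletions are \emph{permanent}, so the entire assignment is recorded in the surviving edges and is still available during verification, even though the model has only polynomially many states; this is precisely the feature separating SDL from OL.

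I expect the main obstacle to be the correctness of the polynomial horizon bound, especially for $\release$: one must argue carefully --- using positionality together with the monotone shrinking of the edge relation --- that ``maintaining $\varphi_2$ forever'' is equivalent to reaching a bounded-depth ``stable'' configuration rather than genuinely needing an unbounded look-ahead. A second delicate point is the hardness gadget: one has to force each quantifier step to commit to exactly one truth value and to prevent the demon from tampering with the records of other variables, which needs a careful choice of edge costs and resource bounds (e.g.\ padding the CNF so that every literal occurs equally often and setting the per-step budget to exactly that number).
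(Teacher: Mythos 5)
Your upper bound is sound and is essentially the paper's argument in different clothing: the paper runs an alternating polynomial-time algorithm (then invokes APTIME $=$ PSPACE) whose while-loops are cut off at a branch depth of $O(|\!\to\!|\cdot|S|)$, which is exactly your horizon $K$; your deterministic polynomial-space recursion with the explicit positionality-plus-splicing argument is a legitimate alternative and in fact spells out the justification for that bound that the paper leaves largely implicit (your treatment of $\release$ is sketchier, but the ingredients you name -- monotone shrinking, at most $|\!\to\!|$ strict deletions, stabilised-model $\mathsf{AG}$ tests at the leaves -- are the right ones).

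The genuine gap is in the hardness reduction. Under the semantics of SDL, a demonic $n$-strategy may always remove the empty set (every step only requires cost \emph{at most} $n$), and the universal operator $\astrat{1}\nextt$ quantifies over \emph{all} demonic $1$-strategies, including the do-nothing strategy and strategies that tamper with other variables' gadgets. Hence no tuning of edge costs, per-step budgets, or CNF padding -- your proposed remedy -- can force the universally quantified demon to commit to a truth value for $x_i$: she can simply decline to delete either designated edge, and your chain of modalities then quantifies over plays that encode no assignment at all, breaking the correspondence with $\forall x_i$. What is needed (and what the paper does) is a \emph{syntactic relativisation}: a guard formula $chosen_k$, built from $\Diamond\Diamond$-reachability of the literal states through a hub state $s$ whose self-loop has cost $2$ and so can never be cut with budget $1$, asserting that exactly one of $p_i^0,p_i^1$ remains reachable for each already-quantified variable and both remain reachable for the others. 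The quantifier clauses are then $\astrat{1}\nextt(chosen_k\to\varphi_{k-1})$ for $\forall$ and $\estrat{1}\nextt(chosen_k\land\varphi_{k-1})$ for $\exists$, so deviant strategies are neutralised by a false antecedent (resp.\ excluded by the conjunct) rather than by resource constraints, and the matrix is verified simply by substituting the reachability formulas $\Diamond\Diamond p_i^1$ for the variables. Without some such guard mechanism your reduction, as sketched, does not establish PSPACE-hardness.
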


\begin{proof}

     We start with the lower bound and use a reduction from the PSPACE-complete quantified Boolean formula (QBF) problem. Given an instance of QBF $\Psi := Q_1p_1...Q_np_n\psi(p_1,...,p_n)$, where $Q_i \in \{\exists, \forall\}$, 
     the problem consists in determining whether $\Psi$ is true. W.l.o.g., we assume that there are no free variables in $\Psi$ and that each variable is used for quantification only once.

    For a given instance of QBF $\Psi$, we construct a model $(\M^\Psi,s)$ 
    and a formula $\varphi$ of SDL s.t. $ (\M^\Psi,s) \models \varphi$ iff $\Psi$ is true. Starting with the model, consider $ \M^\Psi = (S, \to, \mathcal{V}, \mathcal{C})$, where $S = \{s, s_1, ..., s_{2n}\}$, $s \to s$, $s \to s_i$ and $s_i \to s$ for all $s_i \in S$,  $\mathcal{V}(p_i^1) = \{s_i\}$ and $\mathcal{V}(p_i^0) = \{s_{n+i}\}$ for $1 \leqslant i \leqslant n$, and $\mathcal{C}(s,s) = 2$ and 1 for any other pair of states. 
    Intuitively, for each $p_i$ in $\Psi$, we have two states with corresponding atoms $p_i^1$ and $p_i^0$, modelling whether $p_i$ was set to \textit{true} or \textit{false}.
   The model corresponding to a QBF instance $\Psi$ with two variables $p_1$ and $p_2$ is depicted in Figure \ref{fig:pspaceex1}.
     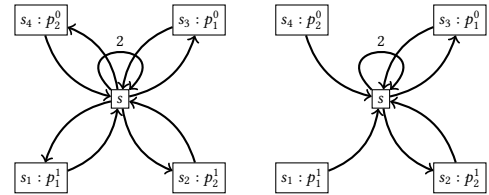
\begin{figure}[!ht]
 
\centering
   \begin{tikzpicture}[scale=0.7, transform shape]
\node[rectangle,draw=black](s) at (0,0) {$s$};
\node[rectangle,draw=black](a10) at (-1.5,-1.5) {$s_1:p_1^1$};
\node[rectangle,draw=black](a11) at (1.5,-1.5) {$s_2:p_2^1$};
\node[rectangle,draw=black](a21) at (-1.5,1.5) {$s_4:p_2^0$};
\node[rectangle,draw=black](a20) at (1.5, 1.5) {$s_3:p_1^0$};

\draw[->,thick, bend right] (s) to  (a10);
\draw[->,thick,bend right] (a10) to  (s);
\draw[->,thick, bend right] (s) to  (a11);
\draw[->,thick,bend right] (a11) to  (s);
\draw[->,thick, bend right] (s) to  (a20);
\draw[->,thick,bend right] (a20) to  (s);
\draw[->,thick, bend right] (s) to  (a21);
\draw[->,thick,bend right] (a21) to  (s);
\draw [<-,thick](s) to [loop above, out=45, in=135, looseness = 10] node[above] {$2$} (s); 

\end{tikzpicture}
\hspace{0.5cm}
  \begin{tikzpicture}[scale=0.7, transform shape]
\node[rectangle,draw=black](s) at (0,0) {$s$};
\node[rectangle,draw=black](a10) at (-1.5,-1.5) {$s_1:p_1^1$};
\node[rectangle,draw=black](a11) at (1.5,-1.5) {$s_2:p_2^1$};
\node[rectangle,draw=black](a21) at (-1.5,1.5) {$s_4:p_2^0$};
\node[rectangle,draw=black](a20) at (1.5,1.5) {$s_3:p_1^0$};

\draw[->,thick,bend right] (a10) to  (s);
\draw[->,thick, bend right] (s) to  (a11);
\draw[->,thick,bend right] (a11) to  (s);
\draw[->,thick, bend right] (s) to  (a20);
\draw[->,thick,bend right] (a20) to  (s);
\draw[->,thick,bend right] (a21) to  (s);
\draw [<-,thick](s) to [loop above, out=45, in=135, looseness = 10] node[above] {$2$} (s); 

\end{tikzpicture}

\caption{Model $\M^\Psi$ (left) and one of its updates (right). 
Cost of all edges is 1 apart from the $s \to s$ edge that has cost 2.}
\label{fig:pspaceex1}
\end{figure} 

Now, we will construct the corresponding formula $\varphi$ of SDL. First, recall that $n$ is the number of atoms in the QBF instance $\Psi$. Then, we define formula $chosen_k$ that will capture the fact that truth values of only the first $k$ atoms out of $n$ in $\Psi$ were set:
    $$chosen_k = \bigwedge_{1 \leqslant i \leqslant k} (\Diamond \Diamond p_i^0 \leftrightarrow \lnot \Diamond \Diamond p_i^1) \land \bigwedge_{k < i \leqslant n} (\Diamond \Diamond p^0_i \land \Diamond \Diamond p^1_i).$$
Recall that $\Diamond \varphi:= \astrat 0 \nextt \varphi$. Intuitively, $chosen_k$ holds if the demon has cut transitions to some of the first $k$ atoms $p^0_i$ and $p^1_i$ in such a way that if the transition to  $p^0_i$ is removed, then there must remain a transition to $p^1_i$. This simulates the unambiguous choice of the truth value of atom $p_i$ in $\Psi$. All other atoms beyond the first $k$ should still be accessible. We have double diamonds in the formula because each action of the demon is followed immediately by a move of the traveller. And if the traveller decides to go to one of the accessible $s_l$ states, we should still be able to verify the accessibility of some $p_i^{j \in \{0,1\}}$ in two steps passing $s$ along the way.  


    Now, we are ready to tackle the construction of $\varphi$.
    \begin{align*}
    \varphi_0 &:= \psi(\Diamond \Diamond p^1_1, ..., \Diamond \Diamond p^1_n)\\
\varphi_k &:= 
\begin{cases} 
	\astrat 1 \nextt (chosen_k \to \varphi_{k-1}) &\text{if } Q_k = \forall\\
	\estrat 1 \nextt (chosen_k \land \varphi_{k-1}) &\text{if } Q_k = \exists\\
\end{cases}\\
\varphi &:= \varphi_{n}.
\end{align*}


What is left to show is that $\Psi:=Q_1 p_1 ... Q_n p_n \psi(p_1,...,p_n)$ is true if and only if $(\M^\Psi,s) \models \varphi$. First, observe that we consider demonic strategies with costs of up to 1, i.e., at each step the demon can remove up to one edge from $s$ to some $s_i$\footnote{Recall that by the definition of demonic strategies, the demon cannot remove an edge if it is the only outgoing edge from a given state.}. 
Removing a transition to $p_i^0$ means that the truth value of $p_i$ is set to \textit{true}. For an example, see the model on the right in Figure \ref{fig:pspaceex1}, where the value of $p_1$ was set to \textit{false}, and $p_2$ was set to \textit{true}.  Guards $chosen_k$ ensure that truth-values of propositions are chosen unambiguously. Therefore, together with the guards, constructs $\astrat 1 \nextt$ and $\estrat 1 \nextt$ in the clause $\varphi_k$ of the translation emulate quantifiers $\forall$ and $\exists$. Once the truth values of all atoms $p_i$ were thus set, the evaluation of the QBF corresponds to the reachability of the corresponding atoms. In particular, we set $p_i$ to \textit{true} in an evaluation of $\Psi$ if and only if $p_i^1$ is reachable in the corresponding submodel.


To show that the model checking problem for SDL is in PSPACE, we present an alternating Algorithm \ref{quantMC}. Let $(\M,s)$ be a finite model, and $\varphi$ be a formula of SDL. Without loss of generality, we assume that formula $\varphi$ is in NNF. In the algorithm, we show only cases $\estrat n \nextt \psi$ and   $\estrat n  \psi_1 \until \psi_2$ for brevity, and the full algorithm can be found in the Technical Appendix.
\begin{breakablealgorithm}
	\caption{An algorithm for model checking SDL}\label{quantMC} 
	\small
	\begin{algorithmic}[1] 		
		\Procedure{MC}{$(\M, s), \varphi$}		
        \Case{$\varphi = \estrat n \nextt \psi$}
        \State{\textbf{existentially choose} $n$-submodel $\M'$}
        \State{\textbf{universally choose} $s'$ such that $s\xrightarrow{\M'} s'$}
        \State{\textbf{return} \textsc{MC}$((\M',s'), \psi)$}
        \EndCase
\Case {$\varphi = \estrat n  \psi_1 \until \psi_2$} 
\State{$X \gets (\M, s)$}
\State{$i \gets 0$}
\While{not \textsc{MC}$(X, \psi_2)$ and $i \leqslant brDepth$}
\If{not \textsc{MC}$(X, \psi_1)$}
\State{\textbf{return} \textit{false}}
\EndIf
\State{\textbf{existentially choose} $n$-submodel $\M'$ of $X$}
\State{\textbf{universally choose} $s'$ such that $s\xrightarrow{\M'} s'$}
\State{$X \gets (\M', s')$}

\State{$i \gets i+1$}
\EndWhile
\If{$i > brDepth$}
\State{\textbf{return} \textit{false}}
\Else
\State{\textbf{return} \textit{true}}
\EndIf

\EndCase



   \EndProcedure

	\end{algorithmic}
\end{breakablealgorithm}


In the algorithm, we explore the computational tree in a depth-first manner using universal and existential choices.
The algorithm follows closely the definition of the semantics, and the correctness can be shown by induction on $\varphi$. The termination follows from the fact that each case breaks down a subformula into simpler subformulas and the finite depth of the tree.

 The depth of each branch, $brDepth$, is at most $O(|\!\to\!|\cdot|S|)$, i.e., the demon has at most $|\!\to\!| + 1$ ways to consecutively modify the model (in the worst case, removing edges one by one, plus an option to not modify the model), and each such $n$-submodel has at most $|S|$ states where the traveller can transition to. In other words, for each $n$-submodel, it is enough to check up to $|S|$ states in the submodel. Thus, the total number of \textit{unique} game positions on a given branch, and hence the depth of each branch in the tree, is bounded by $O(|\!\to\!|\cdot|S|) \leqslant O(|\M|^2)$ and can be explored by our alternating algorithm in polynomial time. Since there are at most $|\varphi|$ subformulas to consider, the total running time of the algorithm is bounded by $O(|\varphi| \cdot |\M|^2)$. From the fact that APTIME (i.e., alternating polynomial time) = PSPACE \cite{chandra1981alternation}, we conclude that the model checking problem for SDL is PSPACE-complete.
\end{proof}

A knowledgeable reader may point out that while model checking SDL is PSPACE-complete, model checking its spiritual predecessor, OL, can be done in polynomial time \cite{CattaLM23}. This is due to the significant difference in the semantics of the two logics. In OL, edges are restored after the traveller has made a move, and in SDL, edge removal is permanent, i.e., we have to keep in memory a polynomial number of modified models to verify the strategic abilities of the demon. However, note that our PSPACE-completeness result is in line with the model checking results for logics with quantification over \textit{permanent} model change (see, e.g., \cite{loding2003model,DitmarschHKK17,FervariV19,galimullin24,delima14}). 


Model checking SCL is also PSPACE-complete by a relatively similar argument that we omit for brevity, and a proof sketch can be found in the Technical Appendix.

\begin{restatable}{theorem}{sclmc}
    Model checking SCL is PSPACE-complete.
\end{restatable}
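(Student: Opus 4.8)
The plan is to follow the blueprint of the SDL model checking proof almost verbatim, dualising ``the demon removes an edge'' to ``the angel adds an edge''. For PSPACE-hardness I would again reduce from QBF. Given $\Psi := Q_1 p_1 \cdots Q_n p_n\,\psi(p_1,\ldots,p_n)$, I would build the model $\M^\Psi$ on states $\{s, s_1,\ldots,s_{2n}\}$ with $s \to s$ and $s_i \to s$ for every $i$ but \emph{no} edge from $s$ to any $s_i$, with $\mathcal{V}(p_i^1)=\{s_i\}$ and $\mathcal{V}(p_i^0)=\{s_{n+i}\}$, and with the cost function giving every edge of the form $s \to s_i$ cost $1$ and every other pair of states cost $2$. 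A resource-$1$ angel can then add exactly one edge per step, and only one of the form $s\to s_i$: adding $s\to s_i$ commits $p_i$ to \emph{true}, adding $s\to s_{n+i}$ commits it to \emph{false}. Because $s$ keeps its self-loop the traveller can always move back to $s$, so $\Diamond\Diamond p_i^j$ holds at $s$ (equivalently at any $s_l$) precisely when the corresponding edge has been added, and the traveller's choice is immaterial, exactly as in the SDL construction. I would reuse the guard $chosen_k$ with the single change that for still-undecided atoms it now demands $\lnot\Diamond\Diamond p_i^0 \land \lnot\Diamond\Diamond p_i^1$ rather than both being reachable, since here one starts from few edges and adds them; then take $\varphi_0 := \psi(\Diamond\Diamond p_1^1,\ldots,\Diamond\Diamond p_n^1)$, $\varphi_k := \allangel{1}\nextt(chosen_k \to \varphi_{k-1})$ when $Q_k = \forall$, $\varphi_k := \angel{1}\nextt(chosen_k \land \varphi_{k-1})$ when $Q_k=\exists$, and $\varphi := \varphi_n$. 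The guards force the angel to realise exactly one truth-value choice at each step, so $\angel{1}\nextt$ and $\allangel{1}\nextt$ emulate $\exists$ and $\forall$, and ``$\Psi$ is true iff $(\M^\Psi,s)\models\varphi$'' follows by induction on the quantifier prefix as for SDL.

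For PSPACE-membership I would reuse the alternating Algorithm~\ref{quantMC}, replacing every ``existentially choose an $n$-submodel'' by ``existentially choose an $n$-supermodel'' (i.e.\ existentially guess a set of fresh edges of total cost at most $n$ to add) and keeping the universal choice of a traveller successor; the temporal cases are adapted in the same way. Correctness is by induction on $\varphi$ following the semantics, exactly as before. The quantitative point is the branch-depth bound: along any increasing model path the edge set is monotone non-decreasing, so on a single branch at most $O(|S^\M|^2)$ edges are ever added, and between two successive additions the traveller successor ranges over at most $|S^\M|$ states; hence $brDepth = O(|S^\M|^3) \leqslant O(|\M|^3)$ suffices, the computation tree has depth bounded by this, and with at most $|\varphi|$ subformulae to consider the whole search runs in alternating polynomial time. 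Since APTIME $=$ PSPACE~\cite{chandra1981alternation}, this yields membership, and together with the reduction, PSPACE-completeness.

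I expect the main obstacle to be precisely this depth bound. Unlike a single-step sabotage move, an angelic strategy can add an arbitrary set of edges in one round and the reachable supermodels need not form a linear chain, so simply counting one change per round does not immediately apply; one has to invoke monotonicity of edge-addition to bound the number of distinct reachable configurations, and hence the length of a branch in the alternating search, by a polynomial in $|\M|$. A secondary point to get right in the lower bound is that the $\Diamond\Diamond$ gadget must not be corruptible by the angel adding unintended edges (for instance between two $s_i$'s); this is exactly what the cost function enforces, since every edge other than $s\to s_i$ is too expensive for a resource-$1$ angel.
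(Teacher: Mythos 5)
Your proposal is correct and follows essentially the same route as the paper: the same QBF reduction (same state set, same $s\to s$ and $s_i\to s$ edges, the same $chosen_k$ guard with reachability negated for undecided atoms, and the same $\varphi_k$ recursion), and the same adaptation of the alternating SDL algorithm with supermodels in place of submodels and the monotonicity-based branch-depth bound $O(|S|^2\cdot|S|)$, giving APTIME $=$ PSPACE membership. The only difference is cosmetic: you price all edges other than $s\to s_i$ at cost $2$ to stop a resource-$1$ angel from adding stray edges, whereas the paper uses uniform cost $1$ and lets the guards handle deviations — a harmless (arguably more robust) variation of the same construction.
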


Finally, we look into the complexity of the model checking problem for SUL and show that it is in EXPSPACE. However, we also note that the next-time fragment of SUL is still PSPACE-complete.

\begin{restatable}{theorem}{sulmc}
    Model checking SUL is in EXPSPACE, and model checking the next-time fragment of SUL is PSPACE-complete.
\end{restatable}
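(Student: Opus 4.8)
The plan is to treat the two claims separately. For the \textbf{EXPSPACE upper bound} on full SUL, I would extend the alternating depth-first algorithm from the SDL proof (Algorithm~\ref{quantMC}) to handle the four variations of the combined strategic operator. The key structural difference is that a single game step now requires \emph{three} nested choices instead of two: for $\angeldemonstrat n m \psi$ we existentially choose an $n$-supermodel part and an $m$-submodel part (forming the $n$-$m$-update) and then universally choose the traveller's successor; for $\demonstrat n m \psi$ we existentially choose the $m$-submodel, universally choose the $n$-supermodel, then universally choose the successor; and symmetrically for the other two coalitions. The subtle point, and the reason the bound degrades from PSPACE to EXPSPACE, is the branch depth. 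In SDL the number of distinct reachable models was bounded polynomially because edges could only be removed, so the sequence of submodels was monotone. In SUL, edges can be both added and removed on successive steps, so a path through the computation tree can cycle through exponentially many distinct intermediate models before repeating a game position; hence $brDepth$ is now $O(2^{|\M|})$ rather than $O(|\M|^2)$. The alternating algorithm still runs in alternating time bounded by $O(|\varphi| \cdot 2^{c|\M|})$ for a constant $c$, and since $\mathrm{AEXPTIME} = \mathrm{EXPSPACE}$ (Chandra--Kozen--Stockmeyer~\cite{chandra1981alternation}), this gives the EXPSPACE membership. Correctness is again by induction on $\varphi$, following the semantics clause-by-clause; termination follows because each recursive call strictly decreases the subformula and the (now exponential but still finite) branch depth bounds the unfolding of temporal operators.

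For the \textbf{next-time fragment}, where path formulae are restricted to $\nextt \varphi$ (and hence no $\until$ or $\release$, so no unbounded unfolding), I would argue PSPACE-completeness. Membership: in this fragment a modality $\esetstrat n m \nextt \psi$ is evaluated by a \emph{single} game step — pick the update, pick the traveller's move, recurse on $\psi$ at the successor state — so the recursion depth of the alternating algorithm is bounded by $|\varphi|$ rather than by any path-length quantity. Each step still involves only guessing sets of edges of size at most $|S|^2$ and a successor among at most $|S|$ states, all writable in polynomial space; thus the algorithm runs in alternating polynomial time, and APTIME $=$ PSPACE gives membership. Hardness: since the SDL lower-bound reduction from QBF in the proof above uses \emph{only} next-time modalities (the guards $chosen_k$ are built from $\Diamond = \astrat 0 \nextt$, and $\varphi_k$ uses $\astrat 1 \nextt$ / $\estrat 1 \nextt$), and since SDL embeds into SUL via $t(\estrat n \psi) = \angeldemonstrat 0 n t(\psi)$ (Theorem~\ref{thm:SCLvsSDL}) — a translation that preserves the next-time fragment — the same QBF reduction yields PSPACE-hardness for the next-time fragment of SUL.

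The \textbf{main obstacle} I anticipate is making the EXPSPACE branch-depth bound rigorous: I need to argue carefully that a \emph{non-repeating} sequence of game positions $(\M_0,s_0),(\M_1,s_1),\dots$ compatible with fixed angelic and demonic strategies has length at most singly exponential in $|\M|$ — i.e., that once a position recurs the algorithm may safely stop, with the recurrence detectable within the EXPSPACE budget — and to handle the release/until operators' satisfaction correctly on such potentially-cyclic paths (the standard trick of bounding the witness index for $\until$ by the number of distinct positions, which is exponential here, and dually for $\release$ detecting a sufficiently long all-$\varphi_2$ stretch). A secondary, more bookkeeping-heavy obstacle is verifying that for the coalition variants with mixed quantifiers ($\demonstrat n m$, $\angelstrat n m$, $\emptystrat n m$) the order of existential versus universal choices in the algorithm exactly matches the semantics, so that the alternation is faithful.
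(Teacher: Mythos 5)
Your proposal is correct and follows essentially the same route as the paper: an alternating algorithm extending the SDL procedure, with an update step whose existential/universal choices match the coalition $C$, an exponential branch-depth bound giving AEXPTIME $=$ EXPSPACE for full SUL, and recursion depth $|\varphi|$ plus APTIME $=$ PSPACE for the next-time fragment. Your hardness argument is in fact slightly more explicit than the paper's (which only notes that SUL subsumes SDL and SCL), since you check that the QBF reduction uses only next-time modalities and that the embedding preserves the fragment.
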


\begin{proof}
To show that model checking SUL is in EXPSPACE, we can provide an alternating algorithm similar to Algorithm \ref{quantMC}. Here, we present a general idea and the algorithm with the full argument can be found in Technical Appendix.

The main difference from Algorithm \ref{quantMC} is that for cases involving $\esetstrat n m$  we have to consider $n$-$m$-updates depending on $C$. Each $n$-$m$-update is of size bounded by $O(|\M|^2)$, and can be computed in polynomial time. For the next-time fragment of SUL, we need to check at most $|\varphi|$ such updates, and hence the total running time is in APTIME = PSPACE. The lower bound follows from the fact that SUL subsumes both SDL and SCL. 

For the full language, similarly to Algorithm \ref{quantMC}, we construct the computational tree with depth of branches bounded by $brDepth$. However, this time, $brDepth$ is exponential, since the demon and the angel together can, in the worst case, force an exponential number of submodels of the fully connected graph of size $|S|^2$, i.e. $brDepth$ is bounded by $O(2^{|\M|^2})$ and hence the algorithm runs in exponential time. Since AEXPTIME=EXPSPACE \cite{chandra1981alternation}, model checking SUL is in EXPSPACE.
\end{proof}

\section{Related Work}
\label{sec:rw}
\textbf{Obstruction Logics.} The initial inspiration for our work came from the research on \textit{Obstruction Logic} (OL) \cite{CattaLM23}, where one can reason about the strategies of the demon to deactivate some edges such that the target property holds regardless of the moves of the traveller. After each step of such a game, all edges are restored, which is different from our SDL setting, where we assume that edges are deactivated once and for all. Extensions of OL include obstruction ATL \cite{CattaLMM24}, timed OL \cite{LeneutreM025}, and coalition OL \cite{catta2025coalition}. All of these extensions feature only temporary edge deactivations. Compared to the body of research on OL, we have also introduced a formalism for reasoning about strategic \textit{activation} of edges, SCL, and also the interplay between the demon and the angel in SUL for scenarios of cooperation and competition. 

\textbf{Sabotage-like Logics.} The research on OL itself was motivated by sabotage games \cite{Benthem05}, where the demon can deactivate any one edge, and the corresponding \textit{Sabotage Modal Logic} (SML) \cite{loding2003model,aucher2018modal} and its generalisation to subsets of edges \cite{CattaLM23A}. Related to SML, where the demon can remove any edge in a model, there is also an approach for definable edge removal \cite{Li20}. 
In the same vein, there have also been logics for reasoning about adding an edge to a model, swapping two edges, copying and removing, etc (see, e.g., \cite{ArecesFH15,ArecesDFS17,ArecesDFMS21}). Compared to all these approaches, we consider weighted graphs, as well as extended strategies of the angel and the demon, as opposed to the next-time outcomes in the cited works. 

\textbf{Dynamic Epistemic Logic.} Adding and removing arrows in particular, and model updates in general, are a bread-and-butter in \textit{Dynamic Epistemic Logics} (DEL) \cite{hvdetal.del:2007}, that are built on epistemic models capturing knowledge of agents, and where updates of such models correspond to various information-changing events. 
Related to edge removal, one can mention arrow updates 
\cite{KooiR11,DitmarschHKK17}, as well as various modes of agents sharing their knowledge with each other 
\cite{galimullin24,BaltagS20,AgotnesW17}. One can also think of adding arrows as an introspection effort of a given agent \cite{FervariV19}.
Moreover, there has been some work on incorporating strategic reasoning into DEL, in particular as reachability games over epistemic models \cite{MaubertPS19,maubert20}, concurrent public communication \cite{AgotnesD08,Galimullin21a}, and alternating-time DEL \cite{delima14}. Even though we can refer to DEL for some intuitions regarding model updates, needless to say, our setting is different from the one of agents' knowledge and learning.

\textbf{Dynamic Logics for Social Networks.} Inspired by DEL, there has been a considerable amount of research on dynamics in social networks (SNs) (see \cite{minathesis} for an overview). In the setting of SNs, adding or removing edges has been used, for example, to model changes in friendship \cite{Gonzalez22,edoardotark}, (un)following other agents \cite{XiongG19}, gaining knowledge in SNs \cite{ChristoffHP16}, visibility of posts on SNs \cite{GalimullinP24}, balance in a network \cite{DHoekKW20}, and sellers' strategies in diffusion auctions \cite{auctions}. 

\textbf{Strategic Reasoning.} In the realm of the strategic reasoning, we see model modifications, apart from the case of OL, primarily in \textit{normative reasoning}, where, given a MAS, a social law (or a norm), divides the set of actions of a given agent in a given state into desirable, or allowed, and undesirable, or prohibited. This is usually done by removing some of the transitions in a model (see, e.g., \cite{Alechina0P25,AlechinaG0P22,AlechinaLD18,bulling2016norm,agotnes10b,knobbout2016dynamic,GalimullinK24}). A general approach to modifications of strategic multi-agent models has recently been proposed in \cite{GalimullinGMM25}. Adding and removing edges was also interpreted as granting or revoking abilities to\textbackslash from agents \cite{galimullin21}. One can mention the research on module checking \cite{kupferman2001module,jamroga2015module}, where some transitions in an execution tree are cut to model various behaviours of the environment. Some types of model dynamics are prominent in separation logics (see, e.g., \cite{BednarczykDFM23,DemriD15,DemriF19,DBLP:conf/lics/Reynolds02}) that 
allow reasoning about the execution of computer programs. To the best of our knowledge, none of these approaches reason about extended temporal goals of agents that are able to modify the topology of the underlying model. 

\textbf{Logics with Resources.} In the context of epistemic logics, resources and costs were used to tackle the logical omniscience problem (see, e.g., \cite{FaginH87,Duc97,AlechinaL02,AlechinaL09}), as constraints on dynamic epistemic actions \cite{Costantini0P21,DolgorukovGG24,Solaki23}, and for epistemic planning \cite{BolanderDH21,BelardinelliR221}, to name a few use cases. In the context of strategic reasoning, there is a plethora of research on resource-bounded agents in the settings of CL and ATL (for example, see \cite{AlechinaLNR10,AlechinaLNR11,DCaoN17,DemriR23,BelardinelliD21,MonicaNP11,DBLP:conf/prima/CattaFM24}). Observe that in the latter case, models are \textit{static} as opposed to our \textit{dynamic} models, and hence the settings are very different. 

\textbf{Games on Graphs.} Finally, sabotage games are not the only games on graphs that involve a traveller and that were analysed with the tools of modal logic (see \cite{graphgames} for an overview). Such analyses include the logics for hide and seek \cite{LiGLT23}, cops and robbers \cite{pigsandrobbers}, and the poison games \cite{GrossiR19,poison}.






\section{Discussion}
\label{sec:disc}
We have presented three novel logics for reasoning about strategies of agents that are able to modify a given model. In SDL, we have the demon who plays an edge-removing strategy such that for all paths taken by the traveller, a target condition holds. Similarly, to reason about edge-adding strategies of the angel, we have proposed SCL. Finally, to express the interaction between the demon and the angel in an ATL-like fashion, we introduced SUL. For all logics, we studied their expressivity and provided model-checking algorithms. 

Since we have just scratched the surface of reasoning about strategic model changes by agents, there is a plethora of future work. First, we would like to study the satisfiability problems of all three logics. Moreover, recall that we defined demonic and angelic strategies in a memory-less fashion, i.e., an action of an agent depends on the current pointed model. We would also like to study their perfect recall strategies, where their actions depend on histories consisting of pointed models. We conjecture that the semantics of all logics are equivalent for both types of strategies via an argument similar to the one for OL \cite{CattaLM23}. 

While defining SDL, we were inspired by OL. However, having introduced edge-adding strategies, it is very tempting to formalise and study a variant of OL, where instead of the demon we have an angel that adds edges for one turn only. Moreover, related to \cite{catta2025coalition}, we find it particularly intriguing to extend SUL to \textit{coalitions of angels and demons}. Finally, we would also like to consider extensions of our logics with greatest and least fixed points in the vein of \cite{Rohde06}.






\bibliographystyle{ACM-Reference-Format} 
\bibliography{ref}

@inproceedings{Rohde06,
  author       = {Philipp Rohde},
  editor       = {Luca Aceto and
                  Anna Ing{\'{o}}lfsd{\'{o}}ttir},
  title        = {On the \emph{{\(\mathrm{\mu}\)}}-Calculus Augmented with Sabotage},
  booktitle    = {Proceedings of the  9th {FOSSACS}},
  series       = {LNCS},
  volume       = {3921},
  pages        = {142--156},
  publisher    = {Springer},
  year         = {2006},
  doi          = {10.1007/11690634\_10},
}

@inproceedings{GalimullinK24,
  author       = {Rustam Galimullin and
                  Louwe B. Kuijer},
  editor       = {Mehdi Dastani and
                  Jaime Sim{\~{a}}o Sichman and
                  Natasha Alechina and
                  Virginia Dignum},
  title        = {Synthesizing Social Laws with {ATL} Conditions},
  booktitle    = {Proceedings of the 23rd {AAMAS}},
  pages        = {2270--2272},
  publisher    = {{IFAAMAS}
                  / {ACM}},
  year         = {2024},
  doi          = {10.5555/3635637.3663130},
}

@inproceedings{auctions,
  author       = {Rustam Galimullin and
                  Munyque Mittelmann and
                  Laurent Perrussel},
  title        = {Formal Verification of Diffusion Auctions},
  booktitle    = {Proceedings of the 40th {AAAI}},
  year         = {2026},
note = {(to appear)},
}

@inproceedings{DemriR23,
  author       = {St{\'{e}}phane Demri and
                  Raine R{\"{o}}nnholm},
  editor       = {Pierre Marquis and
                  Tran Cao Son and
                  Gabriele Kern{-}Isberner},
  title        = {How to Manage a Budget with {ATL+}},
  booktitle    = {Proceedings of the 20th {KR}},
  pages        = {188--197},
  year         = {2023},
  doi          = {10.24963/KR.2023/19},
}

@article{BelardinelliD21,
  author       = {Francesco Belardinelli and
                  St{\'{e}}phane Demri},
  title        = {Strategic reasoning with a bounded number of resources: The quest
                  for tractability},
  journal      = {Artificial Intelligence},
  volume       = {300},
  pages        = {103557},
  year         = {2021},
  doi          = {10.1016/J.ARTINT.2021.103557},
}

@article{AlechinaLNR11,
  author       = {Natasha Alechina and
                  Brian Logan and
                  Nguyen Hoang Nga and
                  Abdur Rakib},
  title        = {Logic for coalitions with bounded resources},
  journal      = {Journal of Logic and Computation},
  volume       = {21},
  number       = {6},
  pages        = {907--937},
  year         = {2011},
  doi          = {10.1093/LOGCOM/EXQ032},
}

@inproceedings{MonicaNP11,
  author       = {Dario Della Monica and
                  Margherita Napoli and
                  Mimmo Parente},
  editor       = {Hans van Ditmarsch and
                  David Fern{\'{a}}ndez{-}Duque and
                  Valentin Goranko and
                  Wojciech Jamroga and
                  Manuel Ojeda{-}Aciego},
  title        = {On a Logic for Coalitional Games with Priced-Resource Agents},
  booktitle    = {Proceedings of the 7th {M4M} and 4th {LAMAS}},
  series       = {ENTCS},
  volume       = {278},
  pages        = {215--228},
  publisher    = {Elsevier},
  year         = {2011},
  doi          = {10.1016/J.ENTCS.2011.10.017},
}

@inproceedings{DCaoN17,
  author       = {Rui Cao and
                  Pavel Naumov},
  editor       = {Carles Sierra},
  title        = {Budget-Constrained Dynamics in Multiagent Systems},
  booktitle    = {Proceedings of the 26th {IJCAI}},
  pages        = {915--921},
  publisher    = {ijcai.org},
  year         = {2017},
  doi          = {10.24963/IJCAI.2017/127},
}

@inproceedings{AlechinaLNR10,
  author       = {Natasha Alechina and
                  Brian Logan and
                  Nguyen Hoang Nga and
                  Abdur Rakib},
  editor       = {Wiebe van der Hoek and
                  Gal A. Kaminka and
                  Yves Lesp{\'{e}}rance and
                  Michael Luck and
                  Sandip Sen},
  title        = {Resource-bounded alternating-time temporal logic},
  booktitle    = {Proceedings of the 9th {AAMAS}},
  pages        = {481--488},
  publisher    = {{IFAAMAS}},
  year         = {2010},
}

@inproceedings{BelardinelliR221,
  author       = {Gaia Belardinelli and
                  Rasmus K. Rendsvig},
  editor       = {Sujata Ghosh and
                  Thomas Icard},
  title        = {Epistemic Planning with Attention as a Bounded Resource},
  booktitle    = {Proceedings of the 8th {LORI}},
  series       = {LNCS},
  volume       = {13039},
  pages        = {14--30},
  publisher    = {Springer},
  year         = {2021},
  doi          = {10.1007/978-3-030-88708-7\_2},
}

@inproceedings{BolanderDH21,
  author       = {Thomas Bolander and
                  Lasse Dissing and
                  Nicolai Herrmann},
  editor       = {Meghyn Bienvenu and
                  Gerhard Lakemeyer and
                  Esra Erdem},
  title        = {DEL-based Epistemic Planning for Human-Robot Collaboration: Theory
                  and Implementation},
  booktitle    = {Proceedings of the 18th {KR}},
  pages        = {120--129},
  year         = {2021},
  doi          = {10.24963/KR.2021/12},
}

@article{Solaki23,
  author       = {Anthia Solaki},
  title        = {Actualizing distributed knowledge in bounded groups},
  journal      = {Journal of Logic and Computation},
  volume       = {33},
  number       = {6},
  pages        = {1497--1525},
  year         = {2023},
  doi          = {10.1093/LOGCOM/EXAC007},
}

@inproceedings{DolgorukovGG24,
  author       = {Vitaliy Dolgorukov and
                  Rustam Galimullin and
                  Maksim Gladyshev},
  editor       = {Mehdi Dastani and
                  Jaime Sim{\~{a}}o Sichman and
                  Natasha Alechina and
                  Virginia Dignum},
  title        = {Dynamic Epistemic Logic of Resource Bounded Information Mining Agents},
  booktitle    = {Proceedings of the 23rd {AAMAS}},
  pages        = {481--489},
  publisher    = {{IFAAMAS}
                  / {ACM}},
  year         = {2024},
  doi          = {10.5555/3635637.3662898},
}

@inproceedings{Costantini0P21,
  author       = {Stefania Costantini and
                  Andrea Formisano and
                  Valentina Pitoni},
  editor       = {Wolfgang Faber and
                  Gerhard Friedrich and
                  Martin Gebser and
                  Michael Morak},
  title        = {An Epistemic Logic for Multi-agent Systems with Budget and Costs},
  booktitle    = {Proceedings of the 17th {JELIA}},
  series       = {LNCS},
  volume       = {12678},
  pages        = {101--115},
  publisher    = {Springer},
  year         = {2021},
  doi          = {10.1007/978-3-030-75775-5\_8},}

@article{AlechinaL09,
  author       = {Natasha Alechina and
                  Brian Logan},
  title        = {A Logic of Situated Resource-Bounded Agents},
  journal      = {Journal of Logic, Language and Information},
  volume       = {18},
  number       = {1},
  pages        = {79--95},
  year         = {2009},

  doi          = {10.1007/S10849-008-9073-6},
}

@inproceedings{AlechinaL02,
  author       = {Natasha Alechina and
                  Brian Logan},
  title        = {Ascribing beliefs to resource bounded agents},
  booktitle    = {Proceedings of the 1st {AAMAS}},
  pages        = {881--888},
  publisher    = {{ACM}},
  year         = {2002},
  doi          = {10.1145/544862.544948},
}

@article{Duc97,
  author       = {Ho Ngoc Duc},
  title        = {Reasoning About Rational, But Not Logically Omniscient, Agents},
  journal      = {Journal of Logic and Computation},
  volume       = {7},
  number       = {5},
  pages        = {633--648},
  year         = {1997},
  doi          = {10.1093/LOGCOM/7.5.633},
}

@article{FaginH87,
  author       = {Ronald Fagin and
                  Joseph Y. Halpern},
  title        = {Belief, Awareness, and Limited Reasoning.},
  journal      = {Artificial Intelligence},
  volume       = {34},
  number       = {1},
  pages        = {39--76},
  year         = {1987},
  doi          = {10.1016/0004-3702(87)90003-8},
}

@article{pigsandrobbers,
  author       = {Dazhu Li and
                  Sujata Ghosh and
                  Fenrong Liu},
  title        = {Reasoning under uncertainty in the game of Cops and Robbers},
  journal      = {CoRR},
  volume       = {abs/2508.00004},
  year         = {2025},
  url          = {https://doi.org/10.48550/arXiv.2508.00004},
  doi          = {10.48550/ARXIV.2508.00004},

}

@article{LiGLT23,
  author       = {Dazhu Li and
                  Sujata Ghosh and
                  Fenrong Liu and
                  Yaxin Tu},
  title        = {A Simple Logic of the Hide and Seek Game},
  journal      = {Studia Logica},
  volume       = {111},
  number       = {5},
  pages        = {821--853},
  year         = {2023},
  doi          = {10.1007/S11225-023-10039-4},
}

@article{Li20,
  author       = {Dazhu Li},
  title        = {Losing connection: the modal logic of definable link deletion},
  journal      = {Journal of Logic and Computation},
  volume       = {30},
  number       = {3},
  pages        = {715--743},
  year         = {2020},
  doi          = {10.1093/LOGCOM/EXZ036},
}

@InProceedings{graphgames,
author="van Benthem, Johan
and Liu, Fenrong",
editor="Liu, Fenrong
and Ono, Hiroakira
and Yu, Junhua",
title="Graph Games and Logic Design",
booktitle="Knowledge, Proof and Dynamics",
year="2020",
publisher="Springer",
pages="125--146",
}

@article{bhamare2020cybersecurity,
  title={Cybersecurity for industrial control systems: A survey},
  author={Bhamare, Deval and Zolanvari, Maede and Erbad, Aiman and Jain, Raj and Khan, Khaled and Meskin, Nader},
  journal={computers \& security},
  volume={89},
  pages={101677},
  year={2020},
  publisher={Elsevier}
}

@article{bila2016vehicles,
  title={Vehicles of the future: A survey of research on safety issues},
  author={Bila, Cem and Sivrikaya, Fikret and Khan, Manzoor A and Albayrak, Sahin},
  journal={IEEE Transactions on Intelligent Transportation Systems},
  volume={18},
  number={5},
  pages={1046--1065},
  year={2016},
  publisher={IEEE}
}

@book{HandbookMC2018,
  editor       = {Edmund M. Clarke and
                  Thomas A. Henzinger and
                  Helmut Veith and
                  Roderick Bloem},
  title        = {Handbook of Model Checking},
  publisher    = {Springer},
  year         = {2018}
}

@InProceedings{poison,
author="Zaffora Blando, Francesca
and Mierzewski, Krzysztof
and Areces, Carlos",
editor="Liu, Fenrong
and Ono, Hiroakira
and Yu, Junhua",
title="The Modal Logics of the Poison Game",
booktitle="Knowledge, Proof and Dynamics",
year="2020",
publisher="Springer",
pages="3--23",

}

@inproceedings{GrossiR19,
  author       = {Davide Grossi and
                  Simon Rey},
  editor       = {Edith Elkind and
                  Manuela Veloso and
                  Noa Agmon and
                  Matthew E. Taylor},
  title        = {Credulous Acceptability, Poison Games and Modal Logic},
  booktitle    = {Proceedings of the 18th {AAMAS}},
  pages        = {1994--1996},
  publisher    = {{IFAAMAS}I},
  year         = {2019},

}

@article{Galimullin21a,
  author       = {Rustam Galimullin},
  title        = {Coalition and Relativised Group Announcement Logic},
  journal      = {Journal of Logic, Language and Information},
  volume       = {30},
  number       = {3},
  pages        = {451--489},
  year         = {2021},
  doi          = {10.1007/S10849-020-09327-2},
}

@inproceedings{AgotnesD08,
  author       = {Thomas {\AA}gotnes and
                  Hans van Ditmarsch},
  editor       = {Lin Padgham and
                  David C. Parkes and
                  J{\"{o}}rg P. M{\"{u}}ller and
                  Simon Parsons},
  title        = {Coalitions and announcements},
  booktitle    = {Proceedings of the 7th {AAMAS}},
  pages        = {673--680},
  publisher    = {{IFAAMAS}},
  year         = {2008},
}

@article{AgotnesW17,
  author       = {Thomas {\AA}gotnes and
                  Y{\`{\i}} N. W{\'{a}}ng},
  title        = {Resolving distributed knowledge},
  journal      = {Artificial Intelligence},
  volume       = {252},
  pages        = {1--21},
  year         = {2017},
  doi          = {10.1016/J.ARTINT.2017.07.002},
}

@inproceedings{BaltagS20,
  author       = {Alexandru Baltag and
                  Sonja Smets},
  editor       = {Elvira Albert and
                  Laura Kov{\'{a}}cs},
  title        = {Learning What Others Know},
  booktitle    = {Proceedings of the 23rd {LPAR}},
  series       = {EPiC Series in Computing},
  volume       = {73},
  pages        = {90--119},
  publisher    = {EasyChair},
  year         = {2020},
  doi          = {10.29007/PLM4},
}

@article{FervariV19,
  author       = {Raul Fervari and
                  Fernando R. Vel{\'{a}}zquez{-}Quesada},
  title        = {Introspection as an action in relational models},
  journal      = {Journal of Logical and Algebraic Methods in Programming},
  volume       = {108},
  pages        = {1--23},
  year         = {2019},
  doi          = {10.1016/J.JLAMP.2019.06.005},
}

@inproceedings{DBLP:conf/lics/Reynolds02,
  author       = {John C. Reynolds},
  title        = {Separation Logic: {A} Logic for Shared Mutable Data Structures},
  booktitle    = {Proceedings of the 17th {LICS}},
  pages        = {55--74},
  publisher    = {{IEEE} Computer Society},
  year         = {2002},
  doi          = {10.1109/LICS.2002.1029817},
}

@article{DemriF19,
  author       = {St{\'{e}}phane Demri and
                  Raul Fervari},
  title        = {The power of modal separation logics},
  journal      = {Journal of Logic and Computation},
  volume       = {29},
  number       = {8},
  pages        = {1139--1184},
  year         = {2019},
  doi          = {10.1093/LOGCOM/EXZ019},
}

@article{DemriD15,
  author       = {St{\'{e}}phane Demri and
                  Morgan Deters},
  title        = {Separation logics and modalities: a survey},
  journal      = {Journal of Applied Non-Classical Logics},
  volume       = {25},
  number       = {1},
  pages        = {50--99},
  year         = {2015},
  doi          = {10.1080/11663081.2015.1018801},
}

@article{BednarczykDFM23,
  author       = {Bartosz Bednarczyk and
                  St{\'{e}}phane Demri and
                  Raul Fervari and
                  Alessio Mansutti},
  title        = {On Composing Finite Forests with Modal Logics},
  journal      = {{ACM} Transactions on Computational Logic},
  volume       = {24},
  number       = {2},
  pages        = {12:1--12:46},
  year         = {2023},
  doi          = {10.1145/3569954},
}

@article{ArecesDFMS21,
  author       = {Carlos Areces and
                  Hans van Ditmarsch and
                  Raul Fervari and
                  Bastien Maubert and
                  Fran{\c{c}}ois Schwarzentruber},
  title        = {Copy and remove as dynamic operators},
  journal      = {Journal of Applied Non-Classical Logics},
  volume       = {31},
  number       = {3-4},
  pages        = {181--220},
  year         = {2021},
  doi          = {10.1080/11663081.2021.1964327},
}

@article{ArecesDFS17,
  author       = {Carlos Areces and
                  Hans van Ditmarsch and
                  Raul Fervari and
                  Fran{\c{c}}ois Schwarzentruber},
  title        = {The modal logic of copy and remove},
  journal      = {Information and Computation},
  volume       = {255},
  pages        = {243--261},
  year         = {2017},
  doi          = {10.1016/J.IC.2017.01.004},
}

@article{ArecesFH15,
  author       = {Carlos Areces and
                  Raul Fervari and
                  Guillaume Hoffmann},
  title        = {Relation-changing modal operators},
  journal      = {Logic Journal of the {IGPL}},
  volume       = {23},
  number       = {4},
  pages        = {601--627},
  year         = {2015},
  doi          = {10.1093/JIGPAL/JZV020},
}

@article{GalimullinP24,
  author       = {Rustam Galimullin and
                  Mina Young Pedersen},
  title        = {Visibility and exploitation in social networks},
  journal      = {Mathematical Structures in Computer Science},
  volume       = {34},
  number       = {7},
  pages        = {615--644},
  year         = {2024},
  doi          = {10.1017/S0960129523000397},
}

@article{ChristoffHP16,
  author       = {Zo{\'{e}} Christoff and
                  Jens Ulrik Hansen and
                  Carlo Proietti},
  title        = {Reflecting on Social Influence in Networks},
  journal      = {Journal of Logic, Language and Information},
  volume       = {25},
  number       = {3-4},
  pages        = {299--333},
  year         = {2016},
  doi          = {10.1007/S10849-016-9242-Y},
}

@inproceedings{edoardotark,
  author       = {Edoardo Baccini and
                  Zo{\'{e}} Christoff},
  editor       = {Rineke Verbrugge},
  title        = {Comparing Social Network Dynamic Operators},
  booktitle    = {Proceedings of the 19th {TARK}},
  series       = {{EPTCS}},
  volume       = {379},
  pages        = {66--81},
  year         = {2023},
  doi          = {10.4204/EPTCS.379.8},
}

@inproceedings{DHoekKW20,
  author       = {Wiebe van der Hoek and
                  Louwe B. Kuijer and
                  Y{\`{\i}} N. W{\'{a}}ng},
  editor       = {Christian Bessiere},
  title        = {Logics of Allies and Enemies: {A} Formal Approach to the Dynamics
                  of Social Balance Theory},
  booktitle    = {Proceedings of the 29th {IJCAI}},
  pages        = {210--216},
  publisher    = {ijcai.org},
  year         = {2020},
  doi          = {10.24963/IJCAI.2020/30},
}

@inproceedings{XiongG19,
  author       = {Zuojun Xiong and
                  Meiyun Guo},
  editor       = {Patrick Blackburn and
                  Emiliano Lorini and
                  Meiyun Guo},
  title        = {A Dynamic Hybrid Logic for Followership},
  booktitle    = {Proceedings of the 7th {LORI}},
  series       = {LNCS},
  volume       = {11813},
  pages        = {425--439},
  publisher    = {Springer},
  year         = {2019},
  doi          = {10.1007/978-3-662-60292-8\_31},
}

@article{Gonzalez22,
  author       = {Sa{\'{u}}l Fern{\'{a}}ndez Gonz{\'{a}}lez},
  title        = {Change in social networks: Some dynamic extensions of Social Epistemic
                  Logic},
  journal      = {Journal of Logic and Computation},
  volume       = {32},
  number       = {6},
  pages        = {1212--1233},
  year         = {2022},
  doi          = {10.1093/LOGCOM/EXAC024},
}

@inproceedings{MaubertPS19,
  author       = {Bastien Maubert and
                  Sophie Pinchinat and
                  Fran{\c{c}}ois Schwarzentruber},
  editor       = {Sarit Kraus},
  title        = {Reachability Games in Dynamic Epistemic Logic},
  booktitle    = {Proceedings of the 28th {IJCAI}},
  pages        = {499--505},
  publisher    = {ijcai.org},
  year         = {2019},
  doi          = {10.24963/IJCAI.2019/71},
}

@inproceedings{AlechinaG0P22,
  author       = {Natasha Alechina and
                  Giuseppe De Giacomo and
                  Brian Logan and
                  Giuseppe Perelli},
  editor       = {Gabriele Kern{-}Isberner and
                  Gerhard Lakemeyer and
                  Thomas Meyer},
  title        = {Automatic Synthesis of Dynamic Norms for Multi-Agent Systems},
  booktitle    = {Proceedings of the 19th {KR}},
  year         = {2022},
}

@inproceedings{Alechina0P25,
  author       = {Natasha Alechina and
                  Brian Logan and
                  Giuseppe Perelli},
  title        = {Synthesising Minimum Cost Dynamic Norms},
  booktitle    = {Proceedings of the 34th {IJCAI} 2025},
  pages        = {3--11},
  publisher    = {ijcai.org},
  year         = {2025},
  doi          = {10.24963/IJCAI.2025/1},
}

@inproceedings{knobbout2016dynamic,
author       = {Max Knobbout and
                  Mehdi Dastani and
                  John{-}Jules Ch. Meyer},
  editor       = {Gal A. Kaminka and
                  Maria Fox and
                  Paolo Bouquet and
                  Eyke H{\"{u}}llermeier and
                  Virginia Dignum and
                  Frank Dignum and
                  Frank van Harmelen},
  title        = {A Dynamic Logic of Norm Change},
  booktitle    = {Proceedings of the 22nd {ECAI}},
  series       = {Frontiers in Artificial Intelligence and Applications},
  volume       = {285},
  pages        = {886--894},
  publisher    = {{IOS} Press},
  year         = {2016},
  doi          = {10.3233/978-1-61499-672-9-886},
}

@article{bulling2016norm,
  title={Norm-based mechanism design},
  author={Bulling, Nils and Dastani, Mehdi},
  journal={Artificial Intelligence},
  volume={239},
  pages={97--142},
  year={2016},
  publisher={Elsevier}
}

@inproceedings{maubert20,
  author       = {Bastien Maubert and
                  Sophie Pinchinat and
                  Fran{\c{c}}ois Schwarzentruber and
                  Silvia Stranieri},
  editor       = {Christian Bessiere},
  title        = {Concurrent Games in Dynamic Epistemic Logic},
  booktitle    = {Proceedings of the 29th {IJCAI}},
  pages        = {1877--1883},
  year         = {2020},
  doi          = {10.24963/IJCAI.2020/260},
}

@ARTICLE{alur2002,
  author = {Rajeev Alur and
               Thomas A. Henzinger and
               Orna Kupferman},
  title = {Alternating-time temporal logic},
  journal = {Journal of the ACM},
  year = {2002},
  volume = {49},
  pages = {672--713},
  doi       = {10.1145/585265.585270}
}

@article{AlechinaLD18,
  author       = {Natasha Alechina and
                  Brian Logan and
                  Mehdi Dastani},
  title        = {Modeling Norm Specification and Verification in Multiagent Systems},
  journal      = {{FLAP}},
  volume       = {5},
  number       = {2},
  pages        = {457--490},
  year         = {2018},
}

@article{agotnes10b,
  author       = {Thomas {\AA}gotnes and
                  Wiebe van der Hoek and
                  Michael J. Wooldridge},
  title        = {Robust normative systems and a logic of norm compliance},
  journal      = {Logic Journal of the {IGPL}},
  volume       = {18},
  number       = {1},
  pages        = {4--30},
  year         = {2010},
  doi          = {10.1093/jigpal/jzp070},
}

@inproceedings{Pnu77,
  author    = {A. Pnueli},
  title     = {The Temporal Logic of Programs},
  booktitle = {Proc. of the Annual Symp. on Foundations of Computer Science}, 
  year      = {1977}
}

@book{hvdetal.del:2007,
author = {Hans van Ditmarsch and Wiebe van der Hoek and Barteld Kooi},
title = {Dynamic Epistemic Logic},
series = {Synthese Library},
volume = {337},
publisher = {Springer},
year = {2008}	}

@inproceedings{mogavero10,
  author       = {Fabio Mogavero and
                  Aniello Murano and
                  Moshe Y. Vardi},
  editor       = {Kamal Lodaya and
                  Meena Mahajan},
  title        = {Reasoning About Strategies},
  booktitle    = {Proceedings of the 30th {FSTTCS}},
  series       = {LIPIcs},
  volume       = {8},
  pages        = {133--144},
  publisher    = {Schloss Dagstuhl - Leibniz-Zentrum f{\"{u}}r Informatik},
  year         = {2010},
  doi          = {10.4230/LIPICS.FSTTCS.2010.133},
}

@article{delima14,
  author       = {Tiago de Lima},
  title        = {Alternating-time temporal dynamic epistemic logic},
  journal      = {Journal of Logic and Computation},
  volume       = {24},
  number       = {6},
  pages        = {1145--1178},
  year         = {2014},
  doi          = {10.1093/LOGCOM/EXS061},
}

@inproceedings{galimullin21,
  author       = {Rustam Galimullin and
                  Thomas {\AA}gotnes},
  editor       = {Sujata Ghosh and
                  Thomas Icard},
  title        = {Dynamic Coalition Logic: Granting and Revoking Dictatorial Powers},
  booktitle    = {Proceedings of the 8th {LORI}},
  series       = {LNCS},
  volume       = {13039},
  pages        = {88--101},
  publisher    = {Springer},
  year         = {2021},
  doi          = {10.1007/978-3-030-88708-7\_7},
}

@phdthesis{minathesis,
  author    = {Mina Young Pedersen},
  title     = {Malicious Agents and the Power of the Few : On the Logic of Abnormality in Social Networks},
  school    = {University of Bergen, Norway},
  year      = {2024},
url = {https://bora.uib.no/bora-xmlui/handle/11250/3151733},
}

@article{kupferman2001module,
  title={Module checking},
  author={Kupferman, Orna and Vardi, Moshe Y and Wolper, Pierre},
  journal={Information and Computation},
  volume={164},
  number={2},
  pages={322--344},
  year={2001},
  publisher={Elsevier}
}

@inproceedings{jamroga2015module,
  author       = {Wojciech Jamroga and
                  Aniello Murano},
  editor       = {Gerhard Weiss and
                  Pinar Yolum and
                  Rafael H. Bordini and
                  Edith Elkind},
  title        = {Module Checking of Strategic Ability},
  booktitle    = {Proceedings of the 14th {AAMAS}},
  pages        = {227--235},
  publisher    = {{ACM}},
  year         = {2015},
}

@article{galimullin24,
  author    = {Rustam Galimullin and Fernando R. Vel{\'{a}}zquez{-}Quesada},
  title     = {Topic-Based Communication Between Agents},
  journal   = {Studia Logica},
  year      = {2024},
  doi       = {10.1007/s11225-024-10119-z},
}

@article{pauly02,
  author    = {Marc Pauly},
  title     = {A modal logic for coalitional power in games},
  journal   = {Journal of Logic and Computation},
  volume    = {12},
  number    = {1},
  pages     = {149--166},
  year      = {2002},
  doi       = {10.1093/logcom/12.1.149},
}

@article{DitmarschHKK17,
  author       = {Hans van Ditmarsch and
                  Wiebe van der Hoek and
                  Barteld Kooi and
                  Louwe B. Kuijer},
  title        = {Arbitrary arrow update logic},
  journal      = {Artificial Intelligence},
  volume       = {242},
  pages        = {80--106},
  year         = {2017},
  doi          = {10.1016/J.ARTINT.2016.10.003},
}

@article{KooiR11,
  author       = {Barteld Kooi and
                  Bryan Renne},
  title        = {Arrow Update Logic},
  journal      = {The Review of Symbolic Logic},
  volume       = {4},
  number       = {4},
  pages        = {536--559},
  year         = {2011},
  doi          = {10.1017/S1755020311000189},
}

@inproceedings{DBLP:conf/prima/CattaFM24,
  author       = {Davide Catta and
                  Angelo Ferrando and
                  Vadim Malvone},
  editor       = {Ryuta Arisaka and
                  V{\'{\i}}ctor S{\'{a}}nchez{-}Anguix and
                  Sebastian Stein and
                  Reyhan Aydogan and
                  Leon van der Torre and
                  Takayuki Ito},
  title        = {Resource Action-Based Bounded {ATL:} {A} New Logic for {MAS} to Express
                  a Cost Over the Actions},
  booktitle    = {Proceedings of the 25th {PRIMA}},
  series       = {LNCS},
  volume       = {15395},
  pages        = {206--223},
  publisher    = {Springer},
  year         = {2024},
  doi          = {10.1007/978-3-031-77367-9\_16},
}

@article{CattaLM23A,
  author       = {Davide Catta and
                  Jean Leneutre and
                  Vadim Malvone},
  title        = {Attack Graphs {\&} Subset Sabotage Games},
  journal      = {Intelligenza Artificiale},
  volume       = {17},
  number       = {1},
  pages        = {77--88},
  year         = {2023},
  doi          = {10.3233/IA-221080},
}

@inproceedings{ctl,
  author       = {Edmund M. Clarke and
                  E. Allen Emerson},
  editor       = {Dexter Kozen},
  title        = {Design and Synthesis of Synchronization Skeletons Using Branching-Time
                  Temporal Logic},
  booktitle    = {Logics of Programs},
  series       = {LNCS},
  volume       = {131},
  pages        = {52--71},
  publisher    = {Springer},
  year         = {1981},
  doi          = {10.1007/BFB0025774},
}

@article{chandra1981alternation,
  title={Alternation},
  author       = {Ashok K. Chandra and
                  Dexter Kozen and
                  Larry J. Stockmeyer},
  journal={Journal of the {ACM}},
  volume={28},
  number={1},
  pages={114--133},
  year={1981},
  doi          = {10.1145/322234.322243},
}

@inproceedings{CattaLM23,
  author       = {Davide Catta and
                  Jean Leneutre and
                  Vadim Malvone},
editor       = {Kobi Gal and
                  Ann Now{\'{e}} and
                  Grzegorz J. Nalepa and
                  Roy Fairstein and
                  Roxana Radulescu},
  title        = {Obstruction Logic: {A} Strategic Temporal Logic to Reason About Dynamic
                  Game Models},
  booktitle    = {Proceedings of the 26th {ECAI}},
  series       = {Frontiers in Artificial Intelligence and Applications},
  volume       = {372},
  pages        = {365--372},
  publisher    = {{IOS} Press},
  year         = {2023},
  doi          = {10.3233/FAIA230292},
}

@inproceedings{GalimullinGMM25,
  author       = {Rustam Galimullin and
                  Maksim Gladyshev and
                  Munyque Mittelmann and
                  Nima Motamed},
  editor       = {Sanmay Das and
                  Ann Now{\'{e}} and
                  Yevgeniy Vorobeychik},
  title        = {Changing the Rules of the Game: Reasoning About Dynamic Phenomena
                  in Multi-Agent Systems},
  booktitle    = {Proceedings of the 24th {AAMAS}},
  pages        = {829--838},
  publisher    = {{IFAAMAS}
                  / {ACM}},
  year         = {2025},
  doi          = {10.5555/3709347.3743601},
}

@article{aucher2018modal,
  title={Modal logics of sabotage revisited},
  author={Aucher, Guillaume and Benthem, Johan van and Grossi, Davide},
  journal={Journal of Logic and Computation},
  volume={28},
  number={2},
  pages={269--303},
  year={2018},
  doi          = {10.1093/LOGCOM/EXX034},
}

@inproceedings{loding2003model,
  title={Model checking and satisfiability for sabotage modal logic},
  author={L{\"o}ding, Christof and Rohde, Philipp},
editor       = {Paritosh K. Pandya and
                  Jaikumar Radhakrishnan},
  booktitle    = {Proceedings of the 23rd {FSTTCS}},
  pages={302--313},
  year={2003},
  publisher={Springer},
  series       = {LNCS},
  volume       = {2914},
  doi          = {10.1007/978-3-540-24597-1\_26},
}

@inproceedings{CattaLMM24,
  author       = {Davide Catta and
                  Jean Leneutre and
                  Vadim Malvone and
                  Aniello Murano},
editor       = {Mehdi Dastani and
                  Jaime Sim{\~{a}}o Sichman and
                  Natasha Alechina and
                  Virginia Dignum},
  title        = {Obstruction Alternating-time Temporal Logic: {A} Strategic Logic to
                  Reason about Dynamic Models},
  booktitle    = {Proceedings of the 23rd {AAMAS}},
  pages        = {271--280},
  publisher    = {{IFAAMAS}
                  / {ACM}},
  year         = {2024}
}

@inproceedings{Benthem05,
  author       = {Johan van Benthem},
  title        = {An Essay on Sabotage and Obstruction},
  booktitle    = {Mechanizing Mathematical Reasoning},
  series       = {LNCS},
  volume       = {2605},
  pages        = {268--276},
  publisher    = {Springer},
  year         = {2005}
}

@inproceedings{LeneutreM025,
  author       = {Jean Leneutre and
                  Vadim Malvone and
                  James Ortiz},
editor       = {Sanmay Das and
                  Ann Now{\'{e}} and
                  Yevgeniy Vorobeychik},
  title        = {Timed Obstruction Logic: {A} Timed Approach to Dynamic Game Reasoning},
  booktitle    = {Proceedings of the 24th {AAMAS}},
  pages        = {1272--1281},
  publisher    = {{IFAAMAS}
                  / {ACM}},
  year         = {2025},
  doi          = {10.5555/3709347.3743759},
}

@inproceedings{catta2025coalition,
  author       = {Davide Catta and
                  Jean Leneutre and
                  Vadim Malvone and
                  James Ortiz},
  title        = {Coalition Obstruction Temporal Logic: {A} New Obstruction Logic to
                  Reason About Demon Coalitions},
  booktitle    = {Proceedings of the 34th {IJCAI}},
  pages        = {21--28},
  publisher    = {ijcai.org},
  year         = {2025},
  doi          = {10.24963/IJCAI.2025/3},
}

\clearpage
\appendix
\section*{Technical Appendix}

\sdlvsol*

\begin{proof}
Consider an SDL formula $\estrat 1 \mathsf{F} p$, and assume towards a contradiction that there is an equivalent formula $\varphi$ of OL. Since formulas of OL are finite, and each OL modality has a finite resource bound, we can assume that $n$ is the greatest $n \in \mathbb{N}$ appearing in $\varphi$. Now consider models $\M_{n+2}$ and $\M_{n+3}$ in Figure \ref{fig:pspaceex32} below.

    \begin{figure}[h!]
\centering
   \begin{tikzpicture}[scale=0.7, transform shape]
\node[rectangle,draw=black](s) at (0,0) {$s_1$};
\node[rectangle,draw=black](a10) at (0,-1) {$s_2$};
\node(dots) at (0,-2) {$...$};
\node[rectangle,draw=black](sn0) at (0,-3) {$s_n$};
\node[rectangle,draw=black](sn) at (0,-4) {$s_{n+1}$};
\node[rectangle,draw=black](t1) at (-2,-5) {$t_1: p$};
\node[rectangle,draw=black](t2) at (-1,-5) {$t_2$};
\node(dots2) at (0,-5) {$...$};
\node[rectangle,draw=black](tn1) at (1,-5) {$t_{n+1}$};
\node[rectangle,draw=black](tn2) at (2,-5) {$t_{n+2}$};

\draw[->,thick] (s) to (a10);
\draw[->,thick] (a10) to (dots);
\draw[->,thick] (dots) to (sn0);
\draw[->,thick] (sn0) to (sn);


\draw[->,thick] (sn) to  (t1);
\draw[->,thick] (sn) to (t2);
\draw[->,thick] (sn) to (dots2);
\draw[->,thick] (sn) to (tn1);
\draw[->,thick] (sn) to (tn2);
\end{tikzpicture}
\hspace{0.6cm}
   \begin{tikzpicture}[scale=0.7, transform shape]
\node[rectangle,draw=black](s) at (0,0) {$s_1$};
\node[rectangle,draw=black](a10) at (0,-1) {$s_2$};
\node(dots) at (0,-2) {$...$};
\node[rectangle,draw=black](sn0) at (0,-3) {$s_n$};
\node[rectangle,draw=black](sn) at (0,-4) {$s_{n+1}$};
\node[rectangle,draw=black](t1) at (-2,-5) {$t_1: p$};
\node[rectangle,draw=black](t2) at (-1,-5) {$t_2$};
\node(dots2) at (0,-5) {$...$};
\node[rectangle,draw=black](tn1) at (1,-5) {$t_{n+2}$};
\node[rectangle,draw=black](tn2) at (2,-5) {$t_{n+3}$};

\draw[->,thick] (s) to (a10);
\draw[->,thick] (a10) to (dots);
\draw[->,thick] (dots) to (sn0);
\draw[->,thick] (sn0) to (sn);


\draw[->,thick] (sn) to  (t1);
\draw[->,thick] (sn) to (t2);
\draw[->,thick] (sn) to (dots2);
\draw[->,thick] (sn) to (tn1);
\draw[->,thick] (sn) to (tn2);
\end{tikzpicture}
 

\caption{Models $\M_{n+2}$ (left) and $\M_{n+3}$ (right) with reflexive arrows for $t$-states omitted for readability. The cost of all edges is 1. }
\label{fig:pspaceex32}
\end{figure}
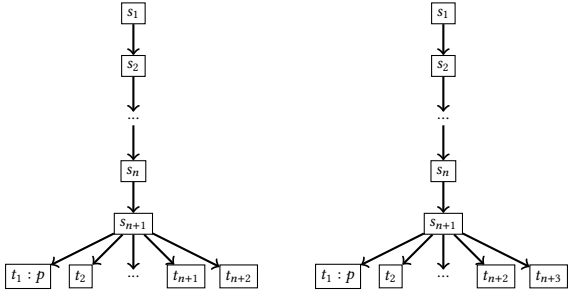 

The models are quite similar, with the only difference being that the one on the left has one less of $t$-states. For both models, $p$ is true only in state $t_1$. 

Now, to show that $(\M_{n+2},s_1)\models \varphi$ if and only if $(\M_{n+3},s_1) \models \varphi$, we use induction over $\varphi$, and we present a sketch of the argument here. For the base case, states $s_1$ in both models trivially satisfy the same propositional atoms. The Boolean cases follow as expected. Now, let us take a look at constructs $\langle \dagger^m \rangle \psi$ and $[ \dagger^m ]\psi$ with $m \leqslant n$. 
First, notice that the models are constructed in such a way that the demon cannot remove any edges between $s_i$-states. Moreover, recall that in OL, after a demon removes some edges and the traveller makes a move, all the edges are restored. Thus, it is enough to consider only demonic strategies from state $s_{n+1}$. Hence, what is left to argue is that for any demonic strategy in $(\M_{n+2}, s_{n+1})$ there is an equivalent demonic strategy in $(\M_{n+3}, s_{n+1})$, and vice versa. 

In the left-to-right direction, it is immediate that the choice of (up to $n$) $s_{n+1}-t_i$-edges to remove in $(\M_{n+2}, s_{n+1})$ can be directly replicated by exactly the same choice in $(\M_{n+3}, s_{n+1})$. Hence, if \textit{there exists} a demonic strategy in $(\M_{n+2}, s_{n+1})$, then there is an equivalent demonic strategy in  $(\M_{n+3}, s_{n+1})$. Moreover, \textit{any} demonic strategy in $(\M_{n+2}, s_{n+1})$ has an equivalent demonic strategy in $(\M_{n+3}, s_{n+1})$.

In the right-to-left direction, we can apply quite a similar argument with the only difference that if the demon removes the edge between $s_{n+1}$ and $t_{n+3}$ in $(\M_{n+3}, s_{n+1})$, then in $(\M_{n+2}, s_{n+1})$ we choose the first $t_i$-state such that $t_i \neq t_1$ and the edge from $s_{n+1}$ to $t_i$ is \textit{not} removed in $(\M_{n+3}, s_{n+1})$. Because we assume that $n$ is the greatest resource bound in $\varphi$, such a state $t_i$ will always be available. Since there $n+2$ and $n+3$ $t$-states correspondingly, the demon will always leave at least 2 (resp. 3) $t$-states accessible, i.e., at least one $t$-state not satisfying $p$ will always be accessible. In both cases, whether (i) $t_1$ is still accessible as well as $k$ of the other $t$-states, or (ii) there are only $k\leqslant n+2$ $t_i$-states with $i>1$ accessible, the traveller has to move into one of those states, where no further non-trivial demonic strategies are possible. It easy to see that $(\M_{n+2}, t_1)$ and $(\M_{n+3}, t_1)$ are not distinguishable by any OL formula, as well as $(\M_{n+2}, t_i)$ and $(\M_{n+3}, t_j)$ with $i,j > 1$.

Now, going back to our SDL formula $\estrat 1 \mathsf{F} p$, we can argue that $(\M_{n+2},s)\models \estrat 1 \mathsf{F} p$ and $(\M_{n+3},s)\not \models \estrat 1 \mathsf{F} p$. Let's start with $(\M_{n+2},s) \models \estrat 1 \mathsf{F} p$. By the definition of semantics, this is equivalent to the fact that there is a demonic strategy $\strat$ with cost $\mathcal{C}(\strat) \leqslant 1$ such that for all paths $\pi \in Out(\strat, (\M,s))$ we have that $\pi \models \mathsf{F} p$. A demonic strategy is a function from pointed models to sets of edges, and here we consider the strategy $\strat$ such that if given a pointed model with some $s_i$ state, then choose the edge $s_{n+1} \to t_{i+1}$ to remove, and if given a pointed model with some $t$-state, then choose the empty set of edges to remove. Intuitively, such a strategy will remove edges from $s_{n+1}$ to $t_i$ states, with $i>1$, one by one. This demonic strategy is played against all possible moves of the traveller. Notice that the model is constructed in such a way that the traveller does not have any other option than moving along the $s$-states. So by the time the traveller reaches state $s_{n+1}$, the demon will have removed $n$ edges from state $s_{n+1}$ to states from $t_2$ to $t_{n+1}$. In state  $s_{n+1}$ the demon, following her strategy, removes the edge between $s_{n+1}$ and $t_{n+2}$, and hence forces the traveller to enter the $t_1$ state, where $p$ is satisfied. 

To see that $(\M_{n+3},s) \not \models \estrat 1 \mathsf{F} p$, it is enough to notice that the same gradual edge removing trick will not work for model $\M_{n+3}$, as by the time the traveller reaches state $s_{n+1}$, the demon still have two edges to remove, from $s_{n+1}$ to $t_{n+2}$ and $t_{n+3}$, in order to force the satisfaction of $p$. This is impossible, as models are constructed in such a way that the demon can remove at most one edge per round. This implies that the demon cannot guarantee that the traveller ends up in state $t_1$ where $p$ holds.  
\end{proof}

\sdlmc*
\begin{proof}

The full algorithm for model checking SDL is presented below.
\begin{breakablealgorithm}
	\caption{An algorithm for model checking SDL}\label{quantMCapp} 
	\small
	\begin{algorithmic}[1] 		
		\Procedure{MC}{$(\M, s), \varphi$}		
        \Case{$\varphi = p$}
        \State{\textbf{return} $s \in \mathcal{V}(p)$}
        \EndCase
        \Case{$\varphi = \lnot p$}
        \State{\textbf{return} not $s \in \mathcal{V}(p)$}
        \EndCase
        \Case{$\varphi = \psi \lor \chi$}
        \State{\textbf{existentially choose} $\theta \in \{\psi, \chi\}$}
        \State{\textbf{return} \textsc{MC}$((\M,s), \theta)$}
        \EndCase
       \Case{$\varphi = \psi \land \chi$}
        \State{\textbf{universally choose} $\theta \in \{\psi, \chi\}$}
        \State{\textbf{return} \textsc{MC}$((\M,s), \theta)$}
        \EndCase
        \Case{$\varphi = \estrat n \nextt \psi$}
        \State{\textbf{existentially choose} $n$-submodel $\M'$}
        \State{\textbf{universally choose} $s'$ such that $s\xrightarrow{\M'} s'$}
        \State{\textbf{return} \textsc{MC}$((\M',s'), \psi)$}
        \EndCase
\Case {$\varphi = \estrat n  \psi_1 \until \psi_2$} 
\State{$X \gets (\M, s)$}
\State{$i \gets 0$}
\While{not \textsc{MC}$(X, \psi_2)$ and $i \leqslant brDepth$}
\If{not \textsc{MC}$(X, \psi_1)$}
\State{\textbf{return} \textit{false}}
\EndIf
\State{\textbf{existentially choose} $n$-submodel $\M'$ of $X$}
\State{\textbf{universally choose} $s'$ such that $s\xrightarrow{\M'} s'$}
\State{$X \gets (\M', s')$}

\State{$i \gets i+1$}
\EndWhile
\If{$i > brDepth$}
\State{\textbf{return} \textit{false}}
\Else
\State{\textbf{return} \textit{true}}
\EndIf

\EndCase

\Case {$\varphi = \estrat n  \psi_1 \release \psi_2$} 
\State{$X \gets (\M, s)$}
\State{$i \gets 0$}
\While{\textsc{MC}$(X,  \psi_2)$ and $i \leqslant brDepth$}
\If{\textsc{MC}$(X,  \psi_1)$}
\State{\textbf{return} \textit{true}}
\EndIf
\State{\textbf{existentially choose} $n$-submodel $\M'$ of $X$}
\State{\textbf{universally choose} $s'$ such that $s\xrightarrow{\M'} s'$}
\State{$X \gets (\M', s')$}

\State{$i \gets i+1$}
\EndWhile
\If{$i > brDepth$}
\State{\textbf{return} \textit{true}}
\Else
\State{\textbf{return} \textit{false}}
\EndIf

\EndCase
   \EndProcedure

	\end{algorithmic}
\end{breakablealgorithm}

Cases for $\astrat n \nextt \psi$, $\astrat n \psi_1 \until \psi_2$, and $\astrat n \psi_1 \release \psi_2$ are calculated similarly to their diamond counterparts by switching existential and universal choices. 
\end{proof}

\sclmc*

\begin{proof}
    We can reuse Algorithm \ref{quantMCapp} for the argument for the PSPACE inclusion. In particular, we substitute every occurrence of $\estrat n$ and $\astrat n$ with $\angel n$ and $\allangel n$ correspondingly, and `$n$-submodel' with `$n$-supermodel'. The depth of each branch, $brDepth$, is at most $O(|S|^2 \cdot |S|)$ since the angel has at most $|S|^2$+1 ways to consecutively modify the model by adding edges one by one, and the traveller may be required to explore up to $|S|$ states of the resulting supermodels.

    For PSPACE-hardness, we again employ a reduction from the QBF problem. For a given instance of QBF $\Psi := Q_1p_1...Q_np_n\psi(p_1,..., \allowbreak p_n)$, we construct model $ \M^\Psi = (S, \to, \mathcal{V}, \mathcal{C})$, where $S = \{s, s_1, ..., \allowbreak s_{2n}\}$, $s \to s$ and $s_i \to s$ for all $s_i \in S$,  $\mathcal{V}(p_i^1) = \{s_i\}$ and $\mathcal{V}(p_i^0) = \{s_{n+i}\}$ for $1 \leqslant i \leqslant n$, and the cost of edges between any pairs of states is 1. An example of the model $\M^\Psi$ constructed for $\Psi:= \forall p_1 \exists p_2 (p_1 \to p_2)$ is presented in Figure \ref{fig:pspaceex}.

         \begin{figure}[!ht]
 
\centering
   \begin{tikzpicture}[scale=0.7, transform shape]
\node[rectangle,draw=black](s) at (0,0) {$s$};
\node[rectangle,draw=black](a10) at (-2,-2) {$s_1:p_1^1$};
\node[rectangle,draw=black](a11) at (2,-2) {$s_2:p_2^1$};
\node[rectangle,draw=black](a21) at (-2,2) {$s_4:p_2^0$};
\node[rectangle,draw=black](a20) at (2,2) {$s_3:p_1^0$};

\draw[->,thick,bend right] (a10) to  (s);
\draw[->,thick,bend right] (a11) to   (s);
\draw[->,thick,bend right] (a20) to   (s);
\draw[->,thick,bend right] (a21) to  (s);
\draw [<-,thick](s) to [loop above, out=45, in=135, looseness = 10]  (s); 

\end{tikzpicture}


\caption{Model $\M^\Psi$ for the QBF instance $\Psi$. Cost of all edges is 1.}
\label{fig:pspaceex}
\end{figure}
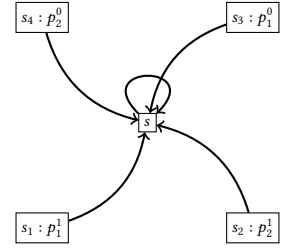 

    Similarly to the case of SDL, we define the standard modal diamond as an abbreviation $\Diamond \varphi := \allangel 0 \nextt \varphi$. The corresponding formula $chosen_k$ is then defined as:
    $$chosen_k = \bigwedge_{1 \leqslant i \leqslant k} (\Diamond \Diamond p_i^0 \leftrightarrow \lnot \Diamond \Diamond p_i^1) \land \bigwedge_{k < i \leqslant n} ( \lnot \Diamond \Diamond p^0_i \land \lnot \Diamond \Diamond p^1_i).$$

Formula $chosen_k$ states that the truth-values of the first $k$ atoms in $\Psi$ were set unambiguously (i.e., for each $p_i$ for $1 \leqslant i \leqslant k$ exactly one of $p_i^0$ and $p_i^1$ is reachable), and the truth-values of other atoms have not been set (i.e., for each $p_i$ with $k < i \leqslant n$ neither $p_i^0$ nor $p_i^1$ is reachable). The main difference from the case of SDL is that now we add edges instead of removing them. 

    The construction of $\varphi$ is then as follows:
    \begin{align*}
    \varphi_0 &:= \psi(\Diamond \Diamond p^1_1, ..., \Diamond \Diamond p^1_n)\\
\varphi_k &:= 
\begin{cases} 
	\allangel 1   \nextt (chosen_k \to \varphi_{k-1}) &\text{if } Q_k = \forall\\
	 \angel 1   \nextt (chosen_k \land \varphi_{k-1}) &\text{if } Q_k = \exists\\
\end{cases}\\
\varphi &:= \varphi_{n}.
\end{align*}
The rest of the argument is similar to the one for the PSPACE-hardness of SDL.
\end{proof}

\sulmc*

\begin{proof}
To show that model checking SUL is in EXPSPACE, we present an alternating Algorithm \ref{quantMC2}, which is similar to Algorithm \ref{quantMC}. For brevity, we consider only cases $\esetstrat n m \nextt
 \psi$ and $\esetstrat n m \psi_1 \until \psi_2$ and other cases can be computed similarly. W.l.o.g., we assume that formula $\varphi$ is in NNF. Given a finite model $\M = (S, \to, \mathcal{V}, \mathcal{C})$,  
 $n$-supermodel $\M_1 = \M \cup A$, and $m$-submodel $\M_2 = \M \setminus B$, where $A$ and $B$ are subsets of the set of edges, the size of the $n$-$m$-update $\M_1 \star \M_2$ is at most $O(|\M|^2)$ (i.e. fully connected graph).
\begin{breakablealgorithm}
	\caption{An algorithm for model checking SUL}\label{quantMC2} 
	\small
	\begin{algorithmic}[1] 		
		\Procedure{MC}{$(\M, s), \varphi$}		
        
        \Case{$\varphi = \esetstrat n m \nextt \psi$}
        
        \State{\textbf{return} \textsc{MC}$(\textsc{Update} ((\M, s),C, n, m), \psi)$}
        \EndCase
\Case {$\varphi = \esetstrat n m  \psi_1 \until \psi_2$} 
\State{$X \gets (\M, s)$}
\State{$i \gets 0$}
\While{not \textsc{MC}$(X, \psi_2)$ and $i \leqslant brDepth$}
\If{not \textsc{MC}$(X, \psi_1)$}
\State{\textbf{return} \textit{false}}
\EndIf
\State{$X \gets \textsc{Update}(X, C, n, m)$}

\State{$i \gets i+1$}
\EndWhile
\If{$i > brDepth$}
\State{\textbf{return} \textit{false}}
\Else
\State{\textbf{return} \textit{true}}
\EndIf

\EndCase

   \EndProcedure

   \Procedure{Update}{$(\M,s), C, n, m$}
           \State{$X \gets (\M, s)$}
        \If{$C = \{\pentacle^m\}$}
            \State{\textbf{existentially choose} $m$-submodel $\M_1$ of $\M$}
            \State{\textbf{univ. choose} $n$-supermodel $\M_2$  of $\M$ \textbf{and} $s'$ s.t. $s\xrightarrow{\M_1 \star \M_2} s'$}
            \State{\textbf{return} $(\M_1 \star \M_2, s')$}
        \ElsIf{$C = \{\angelsymbol^n\}$}
         \State{\textbf{existentially choose} $n$-supermodel $\M_1$  of $\M$}
            \State{\textbf{univ. choose} $m$-submodel $\M_2$  of $\M$ \textbf{and} $s'$ s.t. $s\xrightarrow{\M_1 \star \M_2} s'$}
            \State{\textbf{return} $(\M_1 \star \M_2, s')$}
        \ElsIf{$C = \{\angelsymbol^n, \pentacle^m\}$}
         \State{\textbf{existentially choose} $n$-$m$-update $\M'$  of $\M$}
            \State{\textbf{universally choose} $s'$ s.t. $s\xrightarrow{\M'} s'$}
            \State{\textbf{return} $(\M', s')$}
        \Else
        \State{\textbf{universally choose} $n$-$m$-update $\M'$  of $\M$ \textbf{and} $s'$ s.t. $s\xrightarrow{\M'} s'$}
        \State{\textbf{return} $(\M', s')$}
        \EndIf
        \EndProcedure

	\end{algorithmic}
\end{breakablealgorithm}

In the algorithm, procedure \textsc{Update} determines the updated model to be used based on the given model and the set $C$. Regarding complexity, procedure \textsc{Update} runs in polynomial time. For the next-time fragment of SUL, Algorithm \ref{quantMC2} checks at most $|\varphi|$ subformulas of $\varphi$, and hence the total running time is polynomial. From APTIME=PSPACE, we get that the next-time fragment of SUL is in PSPACE. Hardness follows trivially from the fact that both SDL and SCL are subsumed by SUL, and both logics have PSPACE-complete model checking problems.

Now, let us turn to the full language of SUL. The algorithm is quite similar to the one for SDL and SCL. However, the depth of each branch in our path tree, $brDepth$, is now exponential, since the demon and the angel together can, in the worst case, force an exponential number of submodels of the fully connected graph of size $|S|^2$ to consider. For each such submodel, we also may have to check up to $|S|$ states the traveller can access. In particular, $brDepth$ can be at most $O(2^{|S|^2}\cdot |S|) \leqslant O(2^{|\M|^2})$. Since there are at most $|\varphi|$ subformulas to consider, Algorithm \ref{quantMC2} runs in time bounded by $O(|\varphi| \cdot 2^{|\M|^2})$. From the fact that AEXPTIME = EXPSPACE \cite{chandra1981alternation}, we 
conlude that model checking SUL is in EXPSPACE.
\end{proof}


\end{document}
